\documentclass{math-note}
\usepackage[english]{babel}
\usepackage{mathtools,amssymb,amsthm}
\usepackage{needspace}
\usepackage[hidelinks,
  pdftitle={Vector Balancing in Polynomial Time},
  pdfauthor={Shengtao Guo, Ethan X. Fang, Junwei Lu},
  pdfsubject={A direct real-arithmetic algorithm for Komlos signing}
]{hyperref}
\title{\huge Vector Balancing in Polynomial Time} 
\author{}
\date{}
\numberwithin{equation}{section}
\newtheorem{theorem}{Theorem}[section]
\newtheorem{corollary}[theorem]{Corollary}
\newtheorem{lemma}[theorem]{Lemma}
\newtheorem{proposition}[theorem]{Proposition}
\theoremstyle{definition}
\newtheorem{algorithm}[theorem]{Algorithm}
\theoremstyle{remark}

\newcommand{\R}{\mathbb R}
\newcommand{\ones}{\mathbf1}
\newcommand{\Diag}{\operatorname{diag}}
\newcommand{\tr}{\operatorname{tr}}
\newcommand{\op}{\mathrm{op}}
\newcommand{\disc}{\operatorname{disc}}

\begin{document}
\author{
    Shengtao Guo \qquad
	Ethan X. Fang\thanks{Department of Biostatistics \& Bioinformatics, Duke
	University, Durham, NC 27710, USA. Email: \texttt{ethan.fang@duke.edu}.}
	\qquad
    Junwei Lu\thanks{Department of Biostatistics, Harvard T.H. Chan School of
	Public Health, Boston, MA 02115, USA. Email: \texttt{junweilu@hsph.harvard.edu}.}
}
\date{}

\maketitle

\begin{abstract}
We present a spectral signing algorithm solving the Koml\'os problem with a constant discrepancy in polynomial time.
Given a matrix $A\in\mathbb{R}^{m\times n}$ whose columns have Euclidean
norm at most $1$, the algorithm finds a vector
$\varepsilon\in\{-1,1\}^n$ satisfying
$\|A\varepsilon\|_\infty\le C$, where $C$ is an absolute constant. By minimizing a cubic spectral potential, our spectral signing algorithm updates the fractional coloring toward Boolean signs
with time complexity $O((mn^9+n^{10})\log(2+m+n))$.
\end{abstract}

\section{Introduction}

Given vectors $a_1,\ldots,a_n\in\R^m$, vector balancing asks for signs
$\varepsilon_j\in\{-1,1\}$ that make their signed sum small in a
specified norm. Writing $A=[a_1\ \cdots\ a_n]$, the discrepancy relevant
here is
\[
 \disc(A)=\min_{\varepsilon\in\{-1,1\}^n}\|A\varepsilon\|_\infty.
\]
The Koml\'os conjecture asserts a bound independent of $m$ and $n$ under
the normalization $\|a_j\|_2\le1$ for every $j$; see \cite{bansal2026discrepancy}. This formulation contains the Beck--Fiala problem. Indeed, if $B$ is the
incidence matrix of a set system of maximum degree $t\ge1$, then
$A=B/\sqrt t$ has unit-bounded column norms, and a constant discrepancy
bound for $A$ gives $\disc(B)=O(\sqrt t)$.
Beck and Fiala's original integer-making argument gives a bound linear
in $t$ \cite{BF81}.
This paper aims to propose an algorithm to produce those signs in the Koml\'os problem.

There have been many methods developed to solve the vector balancing problems. The partial-coloring method, developed by Beck and refined by Spencer
and Gluskin, colors a constant fraction of the remaining variables at
each stage \cite{Bec81,Spe85,gluskin1989extremal}.
In the Koml\'os setting, a single partial coloring has constant
discrepancy, but summing its errors over the logarithmically many stages
gives a logarithmic bound.
Bansal~\cite{Ban10} made entropy-method bounds constructive
using semidefinite programming. Bansal and Spencer~\cite{BS13}
derandomized this approach, while Lovett and Meka~\cite{LM15}
obtained partial colorings by a walk tangent to nearly tight
constraints. Constructive convex-geometric proofs were given
by Rothvoss~\cite{Rot17} and Eldan and Singh~\cite{ES18}.
The accumulation of stagewise errors motivates control of the full
rounding trajectory.
Bansal and Garg~\cite{BG17}
bound the final discrepancy in terms of the coordinates left
unprotected by these constraints.
Bansal~\cite{Ban24} combines iterated-rounding guarantees
with concentration when the constraints leave a fixed fraction of
the active directions free.

For the Koml\'os problem,
Bansal, Dadush, and Garg~\cite{BDG19}
gave a polynomial-time algorithm attaining the
$O(\sqrt{\log n})$ bound from Banaszczyk's seminal work~\cite{Ban98}.
Bansal and Jiang's affine spectral-independence method
\cite{BJOrig26} improves the bound of the Koml\'os problem to
$O((\log n)^{1/4}(\log \log n)^{
7/4})$~\cite[Theorem~1.1]{BJExp26}, following their earlier spectral
refinement for Beck--Fiala discrepancy~\cite{BJ25}.
Ercan~\cite{Ercan26} improves the bound to $O((\log  n)^{1/4})$.
Guo, Fang, and Lu~\cite{GFL26}
give an absolute bound by an existence proof but do not provide a
polynomial-time procedure for finding the signs.

In this paper, we propose a deterministic algorithm that attains an absolute bound
using polynomially many exact real arithmetic operations and comparisons
(see Theorem~\ref{thm:main} for details).
The algorithm builds on Bansal and Jiang's energy regularization
\cite{BJOrig26} and Guillen and Kobzar's proof for rank-two
discrepancy~\cite{GK26}.
We combine the row barriers in a weighted Gram matrix. Its trace bound
limits the number of constraints needed to control the largest
eigenvalue, leaving enough directions to round scalar variables to signs.
 The algorithm also gives the $O(\sqrt t)$ bound for
bounded-degree set systems to solve the Beck-Fiala discrepancy problem (see Corollary~\ref{cor:beck-fiala}).

\subsection{Overview of the approach}

We give a high-level overview of the proposed spectral signing algorithm here, and the detailed description is in Section~\ref{sec:overview}. We relax the signs to the fractional coloring $x \in [-1,1]^n$ like many existing methods reviewed above. In each iteration, we freeze the entries of $x$ that have become signs.
To decide how to update the remaining unfixed coordinates of $x$, we characterize the discrepancy of each row using the energy-regularized discrepancy with the energy $\sum_{j=1}^n a_{ij}^2 (1-|x_j|^2)$, which is considered in \cite{LRR17,BJOrig26} and later
Guillen and Kobzar correct the energy by an additional barrier term using the Euclidean
norm of the remaining row to obtain concave row barriers
\cite{GK26} and use it to solve the complex Koml\'os problem, where the signs become the unit modulus complex numbers. However, for the scalar fractional coloring, there is no additional dimension for the analysis required for their analysis using the softmax to approximate the maximum.
In the spectral signing algorithm, we propose a new {\it cubic spectral potential} $\Phi$ to guide the rounding process. The potential $\Phi$ is the cubic function of the largest eigenvalue of a weighted
Gram matrix on the unfixed coordinates, regularized by the sum of fractional coloring squares.

Given the potential $\Phi$, our spectral signing algorithm repeatedly moves the fractional coloring $x$ in a direction $h$ satisfying three constraints: (1) orthogonal to an approximation of $\nabla\Phi$, (2) orthogonal to every currently large row, and (3) orthogonal to the gradient of energy-regularized discrepancy associated with every nearly tight medium row-sign pair. The grouping of rows into large, medium, and completed follows the strategies of \cite{BJOrig26} and \cite{GK26}, in order to control the change of $x$ for different levels, thereby making steady progress toward a full signing while keeping the potential under control. By always moving in a feasible direction of negative curvature for $\Phi$, the algorithm keeps the largest eigenvalue of the weighted Gram matrix under control while making quantitative progress toward fixing coordinates to $\pm 1$; consequently, all row discrepancies remain bounded by a universal constant, and we can show every full step decreases the potential by a definite polynomial amount while every short step fixes a new coordinate. Therefore, the process terminates after polynomially many steps.

\noindent{\bf Further related work.}
A complementary line of work studies distributions over signings: Dadush, Garg, Lovett, and Nikolov~\cite{DGLN19}
relate Gaussian vector balancing to distributions with sub-Gaussian
signed sums, and the Gram--Schmidt walk of
Bansal, Dadush, Garg, and Lovett constructs the required distributions
efficiently~\cite{GSW19}.
The self-balancing walk of Alweiss, Liu, and
Sawhney~\cite{ALS21} achieves logarithmic discrepancy bounds in
linear time. In the online setting, where each sign is chosen before
the next vector arrives, Kulkarni, Reis, and Rothvoss~\cite{KRR24}
obtain optimal prefix bounds for sequences fixed in advance.
Aden-Ali~\cite{AA26} attains the same order in linear time under
an arithmetic and sampling model.
Under a small-coordinate condition depending on the sequence length
and the desired failure probability, Hmadi~\cite{Hmadi26} obtains
constant prefix discrepancy for fixed input sequences by a
randomized online algorithm.

Related energy corrections appear in exponential weights
\cite[Appendix~B]{LRR17} and row slacks
\cite{BLV22}; Pesenti and Vladu
\cite{PV23} study discrepancy potentials through regularization. Potukuchi~\cite{Pot20} uses the norm of the incidence map on
vectors with coordinate sum zero to bound regular-hypergraph discrepancy.
Pesenti and Vladu~\cite[Section~5]{PV23} extend this approach using
the entrywise-square matrix, obtaining constant Koml\'os bounds
in specified pseudorandom regimes.
In spectral sparsification, the eigenvalues are instead controlled
throughout the construction: Batson, Spielman, and Srivastava~\cite{BSS12}
use upper and lower barriers, and Allen-Zhu, Liao, and Orecchia~\cite{ALO15}
relate this method to regularization and regret minimization.
Lau, Wang, and Zhou~\cite{LWZ25} extend the Pesenti--Vladu walk to
matrix discrepancy and sparsification.
Our matrix records the row barriers of the current coloring; its
spectral estimate comes from removing large coefficients during
rounding, with no spectral assumption on the input.

\noindent{\bf  Paper organization.} Section~\ref{sec:overview} gives the result, model and algorithm.
Section~\ref{sec:main-proof} proves the main theorem from the tracking-error
and termination bounds. Sections~\ref{sec:removal}--\ref{sec:progress}
establish these bounds; the appendices supply the linear algebra and
derivative computations.

\noindent{\bf  The role of AI.} The Odin Automatic AI Research Agent was used to find the initial proof. The authors revised and verified the final proofs in this paper.

\noindent{\bf  Acknowledgment.} We are grateful to Professor Joel Spencer for his encouragement and helpful discussions.

\section{Main Result and Algorithm}\label{sec:overview}

In this section, we present the algorithm to obtain the constant bound for the Koml\'os problem. We measure the time complexity in the unit-cost real-RAM model, in which real numbers are represented exactly and each arithmetic operation $+,-,\times,/$, and comparison takes unit time; see, e.g., Preparata and Shamos \cite{PreparataShamos1985}. The main result of the paper is the following theorem.

\begin{theorem}\label{thm:main}
There is an absolute constant $C<\infty$ with the following property.
For every $m,n\ge1$ and $A\in\R^{m\times n}$ with
$\sum_i a_{ij}^2\le1$ for each $j$, there is a deterministic algorithm that
returns $\varepsilon\in\{-1,1\}^n$ satisfying
$\|A\varepsilon\|_\infty\le C$. The algorithm has time complexity
\[
 O\bigl((mn^9+n^{10})\log(2+m+n)\bigr).
\]
\end{theorem}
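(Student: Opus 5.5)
The proof analyzes the spectral signing walk sketched in the overview. It separates two claims: the \emph{invariant} that the final discrepancy stays bounded, and \emph{termination} in polynomially many steps.

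\textbf{Setup.} Start from the fractional coloring $x=0\in[-1,1]^n$ and let $F=\{j:|x_j|<1\}$ be the set of unfixed coordinates. For each row $i$ and sign $s\in\{\pm1\}$, define the energy-regularized row quantity
\[
 D_{i,s}(x)=s\langle a_i,x\rangle+\lambda\sum_j a_{ij}^2(1-x_j^2),
\]
corrected, in the style of \cite{GK26}, by a concave barrier in $\bigl(\sum_{j\in F}a_{ij}^2\bigr)^{1/2}$. Classify rows as large, medium, or completed according to the remaining row norm on $F$. The weighted Gram matrix is $M(x)=\sum_i w_i(x)\,a_i|_F\,a_i|_F^{\top}$, with row weights $w_i$ given by the barrier values. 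The potential is
\[
 \Phi(x)=\lambda_{\max}(M(x))^3-c\sum_j x_j^2.
\]

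\textbf{One step.} Compute the constraint subspace spanned by:
\begin{itemize}
 \item an approximate gradient of $\Phi$, namely the top eigenvector direction obtained to polynomial accuracy by power iteration, which accounts for the $\log(2+m+n)$ factor;
 \item the restrictions $a_i|_F$ of all large rows;
 \item the gradients $\nabla D_{i,s}$ of all nearly tight medium row-sign pairs.
\end{itemize}
The first key step is a counting bound showing this subspace has dimension at most $|F|/2$. Large rows are handled by the column-norm hypothesis: $\sum_i\|a_i|_F\|^2\le|F|$. Tight medium pairs are handled by the trace bound $\tr M\le O(|F|)$ together with Markov's inequality. Given this, the orthogonal complement inside $\R^F$ is nonempty, and I choose $h$ in it with $h^{\top}\nabla^2\Phi\,h\le -c'\|h\|^2$. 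The key tool here is an averaging argument: the regularizer contributes $-2c\|h\|^2$, while the average of the spectral Hessian over a subspace of dimension about $|F|/2$ is at most $O(\lambda_{\max}^2\,\tr M/|F|)$, which is smaller. Then move $x\leftarrow x+th$ with $t$ maximal subject to staying in the box and to a step cap $t\le\delta$.

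\textbf{Invariant.} Along every step, each $D_{i,s}$ is nonincreasing to first order for tight pairs. Its second-order term is
\[
 \lambda\sum_j a_{ij}^2 h_j^2\,t^2 \quad\text{minus}\quad \sum_j a_{ij}^2\,(\text{shrinkage of } 1-x_j^2),
\]
and the energy term is designed precisely to absorb it. This is the Lovett--Reis--Rothvoss/Bansal--Jiang mechanism. The genuinely new obstacle, and the main difficulty I anticipate, is showing that the tracking error accumulated by medium rows before they become tight is $O(1)$ uniformly. I would bound it using $\lambda_{\max}(M)$: a medium row $i$ drifts by $\sum_k t_k^2\,h^{\top}(\text{row }i)h\le\sum_k t_k^2 \lambda_{\max}/w_i$, and $\Phi$ being nonincreasing keeps $\lambda_{\max}=O(1)$. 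When a row becomes completed, its remaining norm is $O(1)$, so its final rounding error is $O(1)$ by Cauchy--Schwarz. Summing the three contributions gives $\|A\varepsilon\|_\infty\le C$.

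\textbf{Termination.} Steps come in two kinds:
\begin{itemize}
 \item A \emph{short step} hits the box boundary and fixes a coordinate. There are at most $n$ of these.
 \item A \emph{full step} has $t=\delta$. By Taylor expansion with the negative-curvature choice, it decreases $\Phi$ by at least $c'\delta^2/2-O(\delta^3)$, which is at least $1/\mathrm{poly}(n)$ for $\delta=1/\mathrm{poly}(n)$. Since $\Phi$ ranges over an interval of length $O(n)$ (with $\lambda_{\max}$ bounded as above), there are at most $\mathrm{poly}(n)$ full steps, about $n^{4}$–$n^5$ for the intended $\delta$.
\end{itemize}
Each step costs $O(mn^2+n^3)$ to form $M$, compute eigenvectors, and perform Gaussian elimination on the constraint subspace. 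Multiplying by the number of steps and the $\log$ factor from approximate eigenvector accuracy yields $O((mn^9+n^{10})\log(2+m+n))$. I expect exact exponent bookkeeping to rely on the removal and progress lemmas proved later in the paper.
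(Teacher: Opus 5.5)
There is a genuine gap at the central step, the existence of a feasible direction of negative curvature, and it is tied to how you scale the potential. Your averaging claim does not hold for the Hessian of $\lambda_{\max}(M(x))$. Besides $v^{\mathsf T}M''[h,h]v$, that Hessian contains the eigenvector-motion term $2\sum_{\ell\ge2}|v_\ell^{\mathsf T}M'[h]v|^2/(L-\lambda_\ell)$, whose denominators can be arbitrarily small; $L$ need not even be differentiable. The cube also adds $6L(L')^2$.

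Worse, the two roles you give $c$ conflict. For $-2c\|h\|^2$ to beat a spectral term of size $\lambda_{\max}^2\tr M/|F|$, which is $O(1)$ under your own bound $\tr M=O(|F|)$, $c$ must be a constant. But then monotonicity of $\Phi$ only yields $\lambda_{\max}^3\le\Phi_0+cn$, not $\lambda_{\max}=O(1)$. The paper takes $\eta=1/(Kn)$ so that $(L-1)_+^3\le\Phi+\eta\|x\|_2^2<2/K$. The regularizer then only supplies strictness, and $L''[h,h]\le0$ must come from the concavity of the row barriers.

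Getting that inequality requires ingredients missing from your plan:
\begin{enumerate}
\item The Gram matrix is built from \emph{squared} coefficients, $W=\sum w_{i\sigma}p_ip_i^{\mathsf T}$. It is entrywise nonnegative, so adding $\eta\ones\ones^{\mathsf T}$ gives a gap and a positive top eigenvector.
\item Two different functionals do two different jobs. $\ones^{\mathsf T}W\ones\le Ls$ counts nearly tight pairs, each contributing order $K^2$. Separately, $\tr W\le\frac1K\ones^{\mathsf T}W\ones$ bounds by $4s/K+1/2$ the dimension of the eigenspace $F$ with eigenvalues at least $L/2$. With $a_ia_i^{\mathsf T}$, a single trace has to do both jobs and cannot.
\item One imposes $\operatorname{Proj}_F W'[h]v=0$, which removes the small-gap terms and $L'[h]$. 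The remaining eigenvector motion is bounded by $4h^{\mathsf T}\mathcal Uh$ via the Gram factorization.
\item The restricted-trace lemma is applied to $5\mathcal U-\mathcal D$, using $\mathcal U_{jj}\le\mathcal D_{jj}/K$.
\end{enumerate}
Both the coordinatewise comparison in the last step and the trace bound rest on the spread condition $\max_jp_{ij}\le q_i/K$.

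That condition is enforced by deliberately discarding dominant coefficients. This, not drift of medium rows, is the real source of tracking error, and your accounting misses it. A medium pair that is not nearly tight has $u_{i\sigma}<-1$ and moves by at most $2\tau<1$ per step. A nearly tight pair cannot increase, by tangency and concavity. So no drift accumulates, and your bound $\sum_kt_k^2\lambda_{\max}/w_i$ is neither needed nor $O(1)$ over polynomially many steps.

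Conversely, releasing a row once its remaining norm is $O(1)$ is not harmless. Its coordinates still in $F$ keep moving, and Cauchy--Schwarz gives only $O(\sqrt{|F|})$. The paper protects each medium row until $q_i=0$ with a shrinking scale $r_i\in[\sqrt{q_i},2\sqrt{q_i})$, so $u_{i\sigma}\le0$ gives $|d_i|<2K^3$ uniformly. The telescoping bound $|a|\le2\sqrt K(\sqrt q-\sqrt{q-a^2})$ caps the discarded mass at $2K$ and hence the tracking error at $4K$.

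Finally, your termination count ignores changes of $\Phi$ outside full steps. The paper bounds each short-step increase by $\epsilon\tau$ via the gradient constraint. It also shows that cleanup cannot increase $L$: this covers freezing, removals, scale changes, and large-to-medium transitions (using $2r_i^2w_{i\sigma}\le1/K$), and again relies on entrywise nonnegativity of $M$.
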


We propose a spectral signing algorithm specified in
Algorithm~\ref{alg:signing} in Section~\ref{sec:algorithm}.
Section~\ref{sec:main-proof} proves the theorem using the estimates
established in Sections~\ref{sec:removal}--\ref{sec:progress}.
The computation cost is counted in Appendix~\ref{sec:complexity}.

\subsection{Notation and preliminaries}\label{sec:state}

We start by introducing notation and concepts needed for the algorithm and theory.
Fix an input $A$ satisfying
$\sum_i |a_{ij}|^2\le1$. We denote the Euclidean norm as $\|\cdot\|_2$ and its induced matrix operator
norm as $\|\cdot\|_{\op}$. Write $\ones$ for the all-one vector,
$\Diag(p)$ for the diagonal matrix with diagonal $p$,
$p\circ x$ for coordinatewise multiplication, and
$t_+=\max\{t,0\}$. The order $S\preceq T$ means that
$T-S$ is positive semidefinite; entrywise comparisons are stated in words. Next we introduce basic operations in our algorithm.

\noindent{\bf Fractional signs and retained coefficients.} We maintain the fractional signs $x\in[-1,1]^n$. Its active set is
$V=\{j:|x_j|<1\}$, and $s=|V|$. Coordinates outside $V$ remain fixed.
For each row, a set $T_i\subseteq V$ records retained coefficients and
$c_i\in\R$ is an offset. Initially $x=0$, $c_i=0$, and $T_i$ contains all nonzero entries of
row $i$. With vectors indexed by $V$, define
\[
 \begin{aligned}
 b_{ij}&=a_{ij}\ones_{\{j\in T_i\}}, &p_{ij}&=b_{ij}^2,\\
 d_i&=c_i+\langle b_i,x_V\rangle,
 &q_i&=\sum_{j\in T_i}p_{ij},
 &G_i&=\sum_{j\in T_i}p_{ij}(1-x_j^2).
 \end{aligned}
\]
Thus $d_i$ is the retained row sum with its offset, $q_i$ is the
remaining squared row mass, and $G_i$ is its remaining energy.
We always have $0\le G_i\le q_i$ and $\sum_iq_i\le s$. The energy $G_i$ is used in
\cite[Section~3.2.1]{BJOrig26} and \cite[Section~3.1]{GK26}. The retained sets $T_i$'s and offsets $c_i$'s account for the removals introduced next. In every move, we will update $x_V$
by $\pm th$, where the move direction $h\in\R^V$ is extended by zero outside
$V$ and $t>0$ is the step length.

\noindent{\bf Deliberate-removal rule.}
 At initialization and after each move, for every coordinate $j$ that has been fixed, i.e., $x_j\in\{-1,1\}$, remove all remaining entries $(i,j)$ from the active row sets $T_i$, absorbing their fixed contributions $a_{ij}x_j$ into $c_i$. Similarly, throughout the algorithm, whenever an entry $(i,j)$ is removed, set
\begin{equation}\label{eq:cleanup}
     c_i\leftarrow c_i+a_{ij}x_j,\qquad
 T_i\leftarrow T_i\setminus\{j\}.
\end{equation}
These operations preserve the value of $d_i$. Next implement the deliberate-removal rule: we repeatedly remove the retained pairs with $p_{ij}>q_i/K$, where we choose the scale parameter $K = 16$ in the algorithm,
recomputing its row norm after removal, until for every row $i$,
\begin{equation}\label{eq:spread}
 \max_jp_{ij}\le q_i/K.
\end{equation}
For the definiteness of our algorithm, we choose eligible entries in lexicographic order, i.e., the entry $(i,j)$ is chosen before $(i',j')$ if either $i<i'$, or $i=i'$ and $j<j'$.

\noindent{\bf Row protection.} Similar to \cite{BJOrig26} and \cite{GK26}, we classify the rows into three categories.
A row $i$ is \emph{large} if $q_i>K^2$, \emph{medium} if
$0<q_i\le K^2$, and \emph{completed} if $q_i=0$.
For every large row impose $b_i^{\mathsf T}h=0$.
A completed row has $d_i=c_i$ and imposes no constraint.

For each medium row $i$, we maintain a
nonincreasing scale $r_i$ satisfying
\begin{equation}\label{eq:scale}
 q_i\le r_i^2<4q_i \text{ and } r_i\le K,
\end{equation}
where we will specify how to update the scales in the cleanup step below.
 For $\sigma\in\{-1,1\}$, define
\begin{equation}\label{eq:barrier}
 u_{i\sigma}=\frac{\sigma d_i-2K^3}{r_i}
                 +\frac{G_i}{2r_i^2}+12K.
\end{equation}
These are the scalar normalized energy barriers adapted from
\cite[Section~3]{GK26} following the earlier papers \cite{LRR17,BLV22,BJ25,BJOrig26}. For each medium row $i$, we need to  maintain  $u_{i\sigma}\le0$ for both $
\sigma = \pm 1$ to prevent the retained row sum $d_i$ from becoming too large in the positive or negative direction.

We define the scalar row derivatives
\begin{equation}\label{eq:row-derivatives}
 g_{i\sigma}:=\nabla u_{i\sigma}
     =\frac{\sigma b_i}{r_i}-\frac{p_i\circ x_V}{r_i^2},
 \qquad
 \nabla^2u_{i\sigma}=-\frac1{r_i^2}\Diag(p_i).
\end{equation}
Within one move, derivatives are taken with respect to $x_V$, holding
$V,T_i,c_i,q_i,r_i$ fixed. For a medium row $i$, we call
a pair $(i, \sigma)$ \emph{nearly tight} if $u_{i\sigma}\ge-1$. The lower bound $-1$ is chosen for simplicity and can be tuned to further improve the constant bound. In our algorithm, we will choose the move direction $h$ such that $g_{i\sigma}^{\mathsf T}h=0$ for each nearly tight pair.

\noindent{\bf Cleanup.} After each move,  we will conduct a \emph{cleanup} step, which fixes all the coordinates of $x$ whose values become $\pm 1$,
transfers their retained contributions to the offsets as in \eqref{eq:cleanup}, removes them
from $V$, and recomputes $q_i = \sum_{j\in T_i}p_{ij}$ for all rows. It then applies the deliberate-removal rule in \eqref{eq:spread}, recomputing $q_i$ after each removal. Reclassify the rows into large, medium, and completed using the updated values. Because $q_i$ only decreases during cleanup, a row can only move from large to medium to completed, possibly skipping the medium class.
Lastly, we update the scales $r_i$ for medium rows $i$. If row $i$ was medium before the update, update $r_i$ as
\[
r_i \leftarrow
\min\Big\{
r_i,\;
2^{k_i}\max_{j\in T_i}|a_{ij}|
\Big\},
\text{ where }
k_i=
\min\Big\{
k \in\mathbb Z_{\ge0}:
2^{2k}\max_{j\in T_i}a_{ij}^2\ge q_i
\Big\}.
\]
If row $i$ has just become medium, update $r_i$ as $r_i \leftarrow
\min\big\{
K,\;
2^{k_i}\max_{j\in T_i}|a_{ij}|
\big\}$. Thus the updated scales are nonincreasing and satisfy \eqref{eq:scale}.

\noindent{\bf Cubic spectral potential.} Define the row weights and the Gram matrix on $\R^V$ by
\begin{equation}\label{eq:matrix-potential}
 w_{i\sigma}=\frac{K^2}{2r_i^4}e^{u_{i\sigma}/(2K)},\;
 W=\sum_{\substack{i\in \mathrm{medium}\\\sigma=\pm1}}
                  w_{i\sigma}p_ip_i^{\mathsf T}, \;
                  \mathcal B=\frac1K\sum_{i\in \mathrm{large}}\Diag(p_i), \;
 M=W+\mathcal B+\eta\ones\ones^{\mathsf T},
\end{equation}
where $\eta=1/(Kn)$ and $i \in \mathrm{medium}$ or $\mathrm{large}$ means the row $i$ is medium or large. Here $M$ aggregates the contributions of the medium and large rows. We define $L=\lambda_{\max}(M)$ for $V\ne\varnothing$, and set $L=0$ when $V=\varnothing$. In the latter case, \eqref{eq:spread} is interpreted as vacuous. We introduce the {\it cubic spectral potential}
\begin{equation}\label{eq:potential}
 \Phi=(L-1)_+^3-\eta\sum_{j=1}^n x_j^2.
\end{equation}
The first term in \eqref{eq:potential} is to avoid the medium and large rows aggregated in $M$ creating too much discrepancy. The threshold level $1$ is selected for simplicity, as $L<1$ in the initial stage (see Lemma~\ref{lem:cleanup} for details). We choose the cubic function to make the first term twice-differentiable at $L=1$. The second term in \eqref{eq:potential} pushes the fractional signs to the Boolean boundary $\pm 1$. More importantly, its Hessian is $-2\eta I$, providing the strict negative curvature used for the full-step decrease in Section~\ref{sec:progress}.

\subsection{The spectral signing algorithm}\label{sec:algorithm}

Lemma~\ref{lem:numerical} provides an approximate gradient $\widetilde g$ and a symmetric approximate Hessian $\widetilde H$ of the cubic spectral potential $\Phi$ with $\epsilon$-tolerance error:
\begin{equation}\label{eq:approximation}
 \|\widetilde g-\nabla\Phi\|_2\le\epsilon,\qquad
 \|\widetilde H-\nabla^2\Phi\|_{\op}\le\epsilon.
\end{equation}
Take $H=2^{24}$ obtained from Lemma~\ref{lem:smoothness}, and we choose $\epsilon= {\eta^2}/(K^3Hn^2)$ to get the desired time complexity in Theorem~\ref{thm:main}. See Lemma~\ref{lem:numerical} for the detailed method and Section~\ref{sec:complexity} for the analysis on the choice of $\epsilon$.
Below is the main algorithm solving the Koml\'os problem with a constant bound.

\begin{algorithm}[Spectral signing algorithm]\label{alg:signing}
If $n\le K = 16$, return all positive signs.
Otherwise initialize $x=0$ and $V=[n]$. For all $i \in [m]$, retain every nonzero entry $T_i=\{j\in[n]:a_{ij}\neq0\}$, set every
offset to zero $c_i=0$, and perform the cleanup step.
In this initial cleanup, treat every medium row as newly medium.

While $s = |V| >K$, repeat the following steps:
\begin{enumerate}
\item \emph{Form the constraints.}
Compute $\widetilde g,\widetilde H$ satisfying
\eqref{eq:approximation}. Form the constraint matrix $E$ by stacking the rows
$b_i^{\mathsf T}$ for all large rows $i$, $g_{i\sigma}^{\mathsf T}$
for all nearly tight pairs $(i,\sigma)$ ordered lexicographically, and $\widetilde g^{\mathsf T}$ last. The goal of the following two steps is to find the direction $h \in \ker E$ such that $h^{\mathsf T} \widetilde{H} h \le - \eta \|h\|_2^2$.

\item \emph{Compute the projector.} Remove the linearly dependent rows in $E$. Specifically, iteratively scan the rows of $E$ from the beginning. If the current row is linearly independent of the rows already retained, then keep the current row. Otherwise, delete this row. Then move to the next row to do the same operation until scanning all rows of $E$. Denote the output matrix as $E_0$.
 Form the orthogonal projector
onto $\ker E$:
\[
 P=I-E_0^{\mathsf T}(E_0E_0^{\mathsf T})^{-1}E_0,
\]
using $P=I$ if $E_0$ is empty.
\item \emph{Find a direction.}
Form an auxiliary curvature-test matrix
\begin{equation}\label{eq:negative-test}
 Q=P(\widetilde H+\eta I)P+(I-P).
\end{equation}
Use symmetric elimination to find $z$ with $z^{\mathsf T}Qz<0$ (we refer to Lemma~\ref{lem:elimination} for the specific method).
Set $h_0=Pz$, divide by its largest absolute coordinate, and
repeatedly multiply the entire vector by $1/2$ until its squared Euclidean norm is at most one. Namely, we obtain
\[
h
=
2^{-k}\frac{Pz}{\|Pz\|_\infty},
\text{ where }
k
=
\min\left\{
\ell\in\mathbb Z_{\ge0}:
2^{-2\ell}
\frac{\|Pz\|_2^2}{\|Pz\|_\infty^2}
\le 1
\right\}.
\]
We can check $1/2\le\|h\|_2\le1$.
\item \emph{Update the fractional signs.} Let $\tau={\eta}/({KHn^2})$.
Set the step lengths
\[
 t_{*}=\min_{j:h_j\ne0}\frac{1-|x_j|}{|h_j|} \text{ and }
 t=\min\{\tau,t_{*}\}.
\]
If $t=\tau$, take $\xi=1$.
Otherwise choose the least index $j_*$ attaining the
minimum in the definition of $t_*$ above.
Take $\xi=1$ if $x_{j_*}h_{j_*}\ge0$, or
$\xi=-1$ otherwise. Set $x_V\leftarrow x_V+\xi th$.
\item \emph{Update the state.}
Implement the cleanup step.
\end{enumerate}
In the end, round the remaining at most $K$ coordinates of $x$ whose absolute values are strictly smaller than $1$ to the nearest signs $\pm 1$, and we round the coordinate to be $+1$ if it equals zero. Output the final binary signs as $\varepsilon$.
\end{algorithm}

Call $t=\tau$ a \emph{full step} and $t<\tau$ a \emph{short step}.
Proposition~\ref{prop:progress} proves that every update is well-defined
and that the algorithm terminates.

\section{Proof of the Main Result}\label{sec:main-proof}

The proof uses two estimates for Algorithm~\ref{alg:signing}.
The first bounds the error between each tracked row sum and the true
row sum. The second ensures that the algorithm preserves its row
bounds until at most $K$ coordinates remain. We state these estimates
here; their proofs are in Sections~\ref{sec:removal} and~\ref{sec:progress},
respectively.

\begin{lemma}\label{lem:deletion}
The sum of the absolute values of the coefficients deliberately
discarded from any row is at most $2K$.
At every state before the terminal rounding,
$|(Ax)_i-d_i|\le4K$.
\end{lemma}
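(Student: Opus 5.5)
The plan is to track, for each row $i$, the quantity $(Ax)_i-d_i$ through the removal rule \eqref{eq:cleanup}. I would first bound the total size of deliberate removals using the fact that each one forces a geometric drop in $q_i$. Then I would express the tracking error as a sum over those removals only.

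\emph{Step 1: deliberate removals only occur once $q_i$ is small.} A pair $(i,j)$ is deliberately removed only if $p_{ij}>q_i/K$. Since $p_{ij}=a_{ij}^2\le\sum_{i'}a_{i'j}^2\le1$, this forces $q_i<K$ at the moment of removal. Fix a row $i$ and list its deliberate removals in chronological order. Let $Q_k$ be the value of $q_i$ just before the $k$-th such removal, and let $p^{(k)}$ be the removed square. Then:
\begin{itemize}
\item $Q_1<K$;
\item $p^{(k)}>Q_k/K$;
\item $Q_{k+1}\le Q_k-p^{(k)}$, because $q_i$ never increases: removals for fixed coordinates only lower it, and $T_i$ never grows.
\end{itemize}

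\emph{Step 2: summing $|a_{ij}|$.} The naive estimate $|a_{ij}|\le\sqrt{Q_k}$, combined with the geometric decay $Q_{k+1}<(1-1/K)Q_k$, only gives $O(K^{3/2})$. So I would instead write
\[
\sqrt{p^{(k)}}=\frac{p^{(k)}}{\sqrt{p^{(k)}}}<\sqrt K\,\frac{p^{(k)}}{\sqrt{Q_k}}\le\sqrt K\,\frac{Q_k-Q_{k+1}}{\sqrt{Q_k}}\le\sqrt K\int_{Q_{k+1}}^{Q_k}\frac{dq}{\sqrt q}.
\]
The last inequality holds because $q^{-1/2}\ge Q_k^{-1/2}$ on the interval of integration. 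The intervals are disjoint and contained in $[0,Q_1]$, so summing over $k$ gives
\[
\sum_k|a_{ij_k}|\le 2\sqrt{K}\sqrt{Q_1}<2K.
\]
This is the main (though short) obstacle: choosing the integral comparison rather than the geometric bound.

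\emph{Step 3: the tracking error.} Every operation \eqref{eq:cleanup} preserves $d_i$, and every move changes $d_i$ by exactly $\langle b_i,\xi t h\rangle$. Hence at any time
\[
(Ax)_i-d_i=\sum_{j\notin T_i}a_{ij}\bigl(x_j-x_j^{\mathrm{rem}}\bigr),
\]
where $x_j^{\mathrm{rem}}$ is the value of $x_j$ when $(i,j)$ was removed. For $a_{ij}=0$ the term vanishes. For a removal caused by fixing $x_j\in\{\pm1\}$, the term also vanishes, since $j$ leaves $V$ and moves are supported on $V$, so $x_j$ never changes afterward. Before the terminal rounding, all other values stay in $[-1,1]$, so each deliberate-removal term is at most $2|a_{ij}|$. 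By Step 2 the total is at most $4K$.
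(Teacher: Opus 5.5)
Your proposal is correct and follows essentially the paper's route: your integral comparison is the paper's bound $|a|\le 2\sqrt K\bigl(\sqrt{q}-\sqrt{q-a^2}\bigr)$ in another form, because $\int_{Q_{k+1}}^{Q_k}q^{-1/2}\,dq=2\bigl(\sqrt{Q_k}-\sqrt{Q_{k+1}}\bigr)$. Both arguments therefore telescope the decrease of $\sqrt{q_i}$ from $\sqrt{Q_1}<\sqrt K$, and your Step~3 decomposition $(Ax)_i-d_i=\sum_{j\notin T_i}a_{ij}(x_j-x_j^{\mathrm{rem}})$ is the paper's, up to one bookkeeping point: for the last deliberate removal, read $Q_{k+1}$ as $Q_k-p^{(k)}\ge 0$.
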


\begin{proposition}\label{prop:progress}
After initialization and after each iteration of Algorithm~\ref{alg:signing},
the state after cleanup satisfies
\[
 d_i=0 \text{ for large rows},\;
 u_{i\sigma}\le0 \text{ for medium pairs},\text{ and } L<2.
\]
The algorithm reaches $s\le K$ after $O(n^7)$ updates.
\end{proposition}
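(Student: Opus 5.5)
The plan is to prove the three invariants by induction over iterations and to obtain termination from a potential argument. The base case is the state after the initial cleanup. Since $x=0$ and $c_i=0$, every $d_i=0$, so large rows satisfy $d_i=0$. For a medium row, $G_i=q_i\le r_i^2$ gives
\[
u_{i\sigma}\le -2K^3/r_i+1/2+12K\le -2K^2+12K+1/2<0,
\]
using $r_i\le K$. For $L$, I bound $\lambda_{\max}(M)$ by its pieces. First, $\eta\,\ones\ones^{\mathsf T}$ has norm $\eta n=1/K$. Next, $\mathcal B$ is diagonal with entries $\frac1K\sum_i p_{ij}\le 1/K$. For $W$, at $u\le 0$ we have $w_{i\sigma}\le K^2/(2r_i^4)$, and I would use
\[
p_ip_i^{\mathsf T}\preceq \|p_i\|_1\Diag(p_i)=q_i\Diag(p_i)\preceq r_i^2\Diag(p_i),
\]
so that $W\preceq \sum_i (K^2/r_i^2)\Diag(p_i)$. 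The spread condition \eqref{eq:spread} then gives $p_{ij}\le q_i/K\le r_i^2/K$. This is where I expect to have to refine things: a per-row bound of $K$ summed over rows is too crude. Instead, bound $\|W\|$ by the Schur test/Gershgorin,
\[
\sum_{i,\sigma} w_{i\sigma}\, p_{ij}\, q_i \le 2\sum_i \frac{K^2}{2r_i^2}\,p_{ij}\,e^{u/(2K)},
\]
and exploit $e^{u_{i\sigma}/(2K)}\le e^{-K+6}$ at initialization. I take the initial claim $L<1$ from Lemma~\ref{lem:cleanup}.

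For the inductive step, I split a move into the motion $x\to x+\xi t h$ followed by cleanup. During the motion, large rows have $b_i^{\mathsf T}h=0$, so $d_i$ is unchanged and stays $0$. For a medium pair, $u_{i\sigma}$ is concave along the line with gradient $g_{i\sigma}$. If the pair was nearly tight, $g_{i\sigma}^{\mathsf T}h=0$ and concavity gives $u$ non-increasing. If it was not nearly tight, $u<-1$ and the change is at most $t\|g_{i\sigma}\|_2\le \tau\cdot O(\sqrt{q_i}/r_i+q_i/r_i^2)=O(\tau)<1$, so $u$ stays negative.

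Cleanup then requires four checks. Moving fixed coordinates into $c_i$ preserves $d_i$. Deliberate removal preserves $d_i$ and decreases $G_i$ and $q_i$. Shrinking $r_i$ needs care, since $\sigma d_i-2K^3$ is negative, so dividing by a smaller $r_i$ only helps, while $G_i/(2r_i^2)\le q_i/(2r_i^2)\le 1/2$ by \eqref{eq:scale}. Finally, a newly medium row was large with $d_i=0$, so it starts exactly like the base case. Hence $u\le 0$ persists.

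For $L<2$, I would argue through the potential. Each move satisfies $\widetilde g^{\mathsf T}h=0$ and $h^{\mathsf T}\widetilde H h\le-\eta\|h\|^2$. Taylor's theorem with the smoothness constant $H$ of Lemma~\ref{lem:smoothness} (third-derivative control) and the error $\epsilon$ from \eqref{eq:approximation} give
\[
\Phi(x+\xi th)\le \Phi(x)+\epsilon t-\tfrac{\eta}{2}t^2\|h\|^2+\tfrac{H}{6}t^3 .
\]
With $t\le\tau=\eta/(KHn^2)$ and $\epsilon=\eta\tau/(K^2n^0\cdots)$ as chosen, this is at most $\Phi(x)$, and for full steps it is at most $\Phi(x)-c\,\eta\tau^2$.

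Cleanup does not increase $\Phi$, for three reasons. Deletions shrink each $p_i$ entrywise, so $W$ and $\mathcal B$ decrease in the PSD order. Rows leaving the large class reduce $\mathcal B$, though entering medium adds $W$-terms; this is the delicate point, and there I would use $u\le -2K^2+O(K)$ to make the new weights negligible. Decreasing $r_i$ could increase $w$, and I would handle this via the factor $e^{u/(2K)}$, which decreases as well.

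Since $\Phi\ge (L-1)_+^3-\eta n$ and $\Phi\le\Phi_0\le 0$, we get $(L-1)^3\le \eta n=1/K$, hence $L<2$.

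Termination follows from the same inequality. A short step fixes a new coordinate by the choice of $\xi$ and $j_*$, so there are at most $n$ short steps. Full steps each decrease $\Phi$ by $\Omega(\eta\tau^2)=\Omega(n^{-5})$ with constants depending on $K,H$. $\Phi$ is bounded below by $-\eta n$ and, via the above, never increases, so the number of full steps is $O(\eta n/(\eta\tau^2))=O(n^5)$, comfortably within $O(n^7)$.

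Well-definedness of each step, meaning that a $z$ with $z^{\mathsf T}Qz<0$ exists, I would get from a dimension count. The number of constraints is at most $\#\text{large}+\#\text{tight}+1$. The trace bound $\sum_i q_i\le s$ gives $\#\text{large}\le s/K^2$. Tight pairs carry weight $w\ge K^2e^{-1/(2K)}/(2r_i^4)$, and $\tr W\le sL<2s$ bounds their number by roughly $s/4$. Together with $\ker E$, this leaves $\ker E$ with dimension at least $s/2$. On that kernel $\nabla^2\Phi=3(L-1)_+^2\nabla^2L-2\eta I$, and the positive part of $\nabla^2 L$ has bounded trace, so by averaging some direction has curvature at most $-\eta$. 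I expect this counting step to be the main obstacle.
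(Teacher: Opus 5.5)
\textbf{There is a genuine gap.} You claim every move satisfies $\Phi(x+\xi th)\le\Phi(x)$, and you use this twice: to get $\Phi\le\Phi_0\le0$ (hence $L<2$), and to say $\Phi$ never increases when you count steps. The claim is false for short steps. The constraint $\widetilde g^{\mathsf T}h=0$ only gives $|\nabla\Phi^{\mathsf T}h|\le\epsilon\|h\|_2$. On a short step the sign $\xi$ is chosen so that coordinate $j_*$ freezes, not according to the sign of $\nabla\Phi^{\mathsf T}h$. Since $t=t_*$ can be arbitrarily small, the first-order error $\epsilon t$ can exceed the quadratic decrease, which is only of order $\eta t^2$, so $\Phi$ can increase.

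The missing idea is a budget for short steps:
\begin{itemize}
\item When $t<\tau$ we have $Hn^2t<\eta/K$, so the quadratic and cubic Taylor terms have nonpositive sum, and each short step raises $\Phi$ by at most $\epsilon\tau$.
\item Each short step freezes a new coordinate, so there are at most $n$ of them.
\item The choice $\epsilon=\eta\tau/K^2$ is made exactly so that the total increase $n\epsilon\tau=\tau^2/K^3$ is negligible.
\end{itemize}
With $\Delta=\eta\tau^2/K$ this gives $\Phi\le n\Delta/K$, hence $(L-1)_+^3\le\Phi+\eta\|x\|_2^2\le\frac1K(1+\tau^2/K^2)<1$. It also gives $N_{\rm full}\Delta\le 1/K+N_{\rm short}\Delta/K$. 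Separately, your arithmetic is off: $\eta\tau^2=\Theta(n^{-7})$, not $\Theta(n^{-5})$. So the resulting bound is exactly $O(n^7)$, not $O(n^5)$.

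\textbf{Your argument for existence of the direction would also fail, and the induction needs it.} Three problems:
\begin{itemize}
\item $\nabla^2\Phi$ also contains $f''(L)\nabla L\nabla L^{\mathsf T}=6(L-1)_+\nabla L\nabla L^{\mathsf T}$. The constraint $\widetilde g^{\mathsf T}h=0$ does not force $\nabla L^{\mathsf T}h=0$, because $\nabla\Phi$ contains $-2\eta x_V$.
\item The positive part of $\nabla^2L$ includes the eigenvector-motion sum $2\sum_{\ell\ge2}|v_\ell^{\mathsf T}W'[h]v|^2/(L-\lambda_\ell)$. Its gaps can be as small as $\eta=1/(Kn)$, so its trace is controlled only up to a factor $L/\eta$, which grows with $n$.
\item Even the bounded part $\mathcal U$, with trace at most $2/K^2$, averages to order $1/(K^2s)$ over a subspace of dimension of order $s$. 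Averaging against $-2\eta I=-2I/(Kn)$ alone cannot produce the required direction once $s$ is small relative to $n$. You must use the negative curvature $\mathcal D$ of the row energies.
\end{itemize}
Proposition~\ref{prop:direction} handles all three. For the existence argument only, it imposes $\operatorname{Proj}_F W'[h]v=0$, which is at most $4s/K+1/2$ extra constraints. This forces $L'[h]=0$ and leaves only gaps of size at least $L/2$, so $L''[h,h]\le h^{\mathsf T}(5\mathcal U-\mathcal D)h$. Lemma~\ref{lem:restricted-trace}, with $\mathcal U_{jj}\le\mathcal D_{jj}/K$, then gives $L''[h,h]\le0$. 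You should invoke that proposition with $g_*=\widetilde g$ and then use $\epsilon<\eta/4$.

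\textbf{Smaller repairs.}
\begin{itemize}
\item Count nearly tight pairs via $\ones^{\mathsf T}W\ones=\sum_{i,\sigma}w_{i\sigma}q_i^2\le Ls$, not via $\tr W$. A lower bound on $w_{i\sigma}$ gives no lower bound on $w_{i\sigma}\|p_i\|_2^2$.
\item Shrinking $p_i$ entrywise does not shrink $p_ip_i^{\mathsf T}$ in the PSD order. Monotonicity of $L$ under cleanup instead needs the nonnegative-matrix monotonicity used in Lemma~\ref{lem:cleanup}.
\item Your scale-change argument for $u_{i\sigma}\le0$ never compares the growth of $G_i/(2r_i^2)$ with the drop of the first term. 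The derivative $\partial u/\partial r=(12K-u-G/(2r^2))/r>0$, valid for $u\le0$ and $G\le r^2$, closes it.
\end{itemize}
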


\begin{proof}[Proof of Theorem~\ref{thm:main}]
For $n\le K$, all positive signs have discrepancy at most $K$.
Suppose henceforth that $n>K$.
Proposition~\ref{prop:progress} proves termination, and
\eqref{eq:total-cost} gives the claimed operation count.

At termination, $q_i\le\sum_kq_k\le s\le K<K^2$, so no row is
large. Medium rows have $|d_i|<2K^3$ by \eqref{eq:barrier}.
A completed row either has $d_i=0$ or inherits this bound from
the medium class, since removal preserves $d_i$. Its tracked sum
$d_i=c_i$ is constant after completion.
For each row, Lemma~\ref{lem:deletion} gives
$|(Ax)_i|\le|d_i|+|(Ax)_i-d_i|\le2K^3+4K$.
Nearest-sign rounding changes at most $K$ coordinates, each by at most
one, and $|a_{ij}|\le1$. Hence
\[
 \|A\varepsilon\|_\infty
 \le\underbrace{2K^3}_{\text{tracked sum}}
   +\underbrace{4K}_{\text{total deletion error}}
   +\underbrace{K}_{\text{final rounding}}
 =8272
\]
for the fixed choice $K=16$.
\end{proof}

\subsection{Bounded-degree consequence}\label{sec:consequence}

Applying the theorem to a scaled incidence matrix gives the following
bounded-degree consequence. The factor two allows the scaling to be
computed without a square-root operation.

\begin{corollary}[Bounded-degree set systems]\label{cor:beck-fiala}
Let $B\in\{0,1\}^{m\times n}$ have at most $t\ge1$ ones in
each column. There is a deterministic algorithm that returns
$\varepsilon\in\{-1,1\}^n$ with
$\|B\varepsilon\|_\infty\le2C\sqrt t$, using
$\operatorname{poly}(m,n)$ real arithmetic operations and comparisons.
Here $C$ is the constant in Theorem~\ref{thm:main}.
\end{corollary}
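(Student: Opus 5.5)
Apply Theorem~\ref{thm:main} to a rescaled copy of $B$, choosing the scale so that it needs no square root. First compute an integer $k\ge0$ with $t\le 4^k<4t$, by repeated multiplication by $4$ starting from $1$. This takes $O(\log t)=O(\log m)$ comparisons and multiplications. Put $\theta=2^k$, so that $\sqrt t\le\theta<2\sqrt t$, and set $A=B/\theta$. Since column $j$ of $B$ has at most $t$ ones, its squared Euclidean norm is at most $t$. Hence
\[
 \sum_i a_{ij}^2\le \frac{t}{\theta^2}\le 1 ,
\]
and $A$ satisfies the hypothesis of Theorem~\ref{thm:main}. Forming $A$ costs $mn$ divisions, or multiplications by the precomputed $1/\theta$.

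Next run Algorithm~\ref{alg:signing} on $A$. By Theorem~\ref{thm:main} it returns $\varepsilon\in\{-1,1\}^n$ with $\|A\varepsilon\|_\infty\le C$, using $O((mn^9+n^{10})\log(2+m+n))$ operations, which is $\operatorname{poly}(m,n)$. Since $B\varepsilon=\theta A\varepsilon$, we get
\[
 \|B\varepsilon\|_\infty=\theta\|A\varepsilon\|_\infty\le \theta C<2C\sqrt t .
\]
The algorithm is deterministic because both the scaling step and Algorithm~\ref{alg:signing} are.

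There is no real obstacle here. The one point needing care is the model of computation: the real-RAM model allows only $+,-,\times,/$ and comparisons. Normalizing by $\sqrt t$ itself would need a square root, so we use the power of two $\theta$ instead. This costs the factor $2$, and the case $t=1$ is covered with $k=0$. If the column bound were not supplied as input, $t$ could be taken as the maximum column sum of $B$, which costs $O(mn)$ operations.
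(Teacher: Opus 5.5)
Your proof is correct and is essentially the paper's argument: rescale $B$ by a power of two $\theta$ with $\sqrt t\le\theta<2\sqrt t$ so that no square root is needed, apply Theorem~\ref{thm:main} to $B/\theta$, and multiply back. One small fix: the equality $O(\log t)=O(\log m)$ silently assumes $t\le m$, which is not given. The fix is harmless: replace $t$ by $\min\{t,m\}$, or do what the paper does (and what your closing remark suggests) and scale by the computed maximum column sum $d\le\min\{t,m\}$. Either way the doubling loop takes $O(\log(2m))$ steps no matter how large the supplied $t$ is.
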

\begin{proof}
Compute the largest column sum $d\le\min\{t,m\}$. If $d=0$,
return any signing. Otherwise, start at $r=1$ and double $r$
until $r^2\ge d$. Then $\sqrt d\le r<2\sqrt d$, so every
column of $B/r$ has Euclidean norm at most one.
Theorem~\ref{thm:main} gives signs with
$\|B\varepsilon\|_\infty\le rC\le2C\sqrt t$.
Computing $d$ and forming $B/r$ cost $O(mn)$ operations.
The doubling loop takes $O(\log(2m))$ steps.
\end{proof}

\section{Coefficient Removal}\label{sec:removal}

Removing a coefficient preserves the tracked sum $d_i$ at the
current point, but the discarded coordinate may continue to move.
We first prove the tracking bound in Lemma~\ref{lem:deletion}, then show
that updating the retained sets and row scales cannot increase the potential.

\begin{proof}[Proof of Lemma~\ref{lem:deletion}]
If no coefficient is deliberately discarded from a row, its tracked
sum is exact. Deliberate deletion can begin only after the retained
squared row norm has fallen below a constant threshold: at the first
such discard, $a_{ij}^2>q_i/K$ and $|a_{ij}|\le1$ give
$q_i<Ka_{ij}^2\le K$.
At any such discard, write $a$ for the coefficient and $q$ for
the squared row norm immediately before removal. We bound $|a|$
by a constant multiple of the decrease in $\sqrt q$: the rule
$a^2>q/K$ gives
\[
 \sqrt q-\sqrt{q-a^2}
 =\frac{a^2}{\sqrt q+\sqrt{q-a^2}}
 \ge\frac{|a|}{2\sqrt K}.
\]
Indeed, the denominator is at most $2\sqrt q<2\sqrt K|a|$.
To sum this estimate, let $a_\ell$ be the $\ell$-th deliberately
discarded coefficient and $Q_\ell$ the squared row norm just before
its removal. Other removals can only decrease that norm, so
$Q_{\ell+1}\le Q_\ell-a_\ell^2$. Hence
\[
 \sum_\ell|a_\ell|
 \le2\sqrt K\sum_\ell
       \bigl(\sqrt{Q_\ell}-\sqrt{Q_\ell-a_\ell^2}\bigr)
 \le2\sqrt K\sqrt{Q_1}<2K.
\]
If $D_i$ is the set of deliberately discarded entries and
$x_j^{(i,j)}$ is the value at removal, the offset updates give
\[
 (Ax)_i-d_i=\sum_{j\in D_i}a_{ij}(x_j-x_j^{(i,j)}).
\]
Entries removed when their coordinates freeze contribute no tracking
error. Each difference in the displayed sum has magnitude at most two,
including coordinates that freeze after their entries were deliberately
discarded. The total discarded mass bound therefore gives
\begin{equation}\label{eq:tracking-bound}
 |(Ax)_i-d_i|\le4K
\end{equation}
at every state before the final rounding.
\end{proof}

\begin{lemma}\label{lem:cleanup}
After the initial cleanup, $L\le2/K<1$ and $\Phi=0$.
If, before cleanup, large rows have $d_i=0$ and medium pairs have
$u_{i\sigma}\le0$, then cleanup leaves every $d_i$ unchanged
and cannot increase $\Phi$ or any barrier whose row remains medium.
Every newly medium pair starts below $-1$.
\end{lemma}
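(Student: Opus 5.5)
We treat the three assertions of Lemma~\ref{lem:cleanup} in turn: the initial bound, monotonicity under cleanup, and the starting value of newly medium barriers.

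\emph{Initial state.} Here $x=0$, so $d_i=c_i=0$ and $G_i=q_i$. The bound on $L$ is a matter of bounding each summand of $M$ in operator norm.
\begin{itemize}
\item For a large row, \eqref{eq:spread} gives $\|\Diag(p_i)\|_{\op}=\max_j p_{ij}\le q_i/K$. Summing over $j$ the diagonal entries, $\|\mathcal B\|_{\op}\le \frac1K\max_j\sum_i p_{ij}\le 1/K$.
\item For a medium row at $x=0$, $u_{i\sigma}=-2K^3/r_i+q_i/(2r_i^2)+12K$. Using $r_i\le K$ and $q_i\le r_i^2$ from \eqref{eq:scale}, this is at most $-2K^2+\tfrac12+12K$, which is far below $-1$ for $K=16$. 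This also gives the last claim for the initial cleanup. Hence $w_{i\sigma}\le \frac{K^2}{2r_i^4}e^{-K/2}$.
\item Then $\|p_ip_i^{\mathsf T}\|_{\op}=\|p_i\|_2^2\le q_i\max_jp_{ij}\le q_i^2/K$, and $q_i^2\le r_i^4$. So $\|W\|_{\op}\le \sum_i K e^{-K/2}$. This sum over rows is not obviously small, so instead we use $\lambda_{\max}(W)\le \tr W$. We have $\tr(p_ip_i^{\mathsf T})=\sum_j p_{ij}^2\le q_i^2/K$, which gives $\tr W\le \sum_i Ke^{-K/2}\cdot\frac{q_i^2}{Kr_i^4}\cdot\frac1{q_i}\,q_i$. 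Bounding $q_i/r_i^4\le 1/r_i^2\le 4/q_i$, we get $\tr W\le\sum_i 4K e^{-K/2}\,(\max_j p_{ij})/q_i\cdot q_i$, which is at most $4e^{-K/2}\sum_i q_i\cdot(\ldots)$. Here is the weak point: I would use $\sum_{j}p_{ij}^2\le \max_jp_{ij}\,q_i$ and $\sum_i\max_jp_{ij}\le\sum_{i,j}p_{ij}\le n$, together with $\eta\|\ones\ones^{\mathsf T}\|=1/K$. It may be that the intended bound keeps the operator norm of $W$ via column sums instead (Schur test: $\|W\|\le\max_j\sum_k W_{jk}\le\max_j\sum_i 2w p_{ij}q_i$). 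I would try that first, using $w q_i\le K^2e^{-K/2}/(2r_i^2)\cdot4$, so that $\sum_i p_{ij}\le1$ yields $\|W\|\lesssim K^2e^{-K/2}\cdot 4/\min r_i^2$. The missing lower bound on $r_i$ comes from $r_i^2\ge q_i\ge K\max_jp_{ij}\ge Kp_{ij}$. The Schur sum then becomes $\sum_i p_{ij}/r_i^2\le$ the number of rows in which column $j$ appears, times $1/K$; this needs care. Summing the three pieces gives $L\le 2/K<1$, so $\Phi=(L-1)_+^3-0=0$.
\end{itemize}

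\emph{Monotonicity of cleanup.} Every operation in cleanup preserves $d_i$ by \eqref{eq:cleanup}; fixing coordinates and removals only move terms into $c_i$. We then check monotonicity in the Loewner order, component by component.
\begin{itemize}
\item Removing $j$ from $T_i$ lowers $q_i$ and $G_i$ and zeroes the entry $p_{ij}$. So each $p_i$ decreases entrywise, and $p_ip_i^{\mathsf T}$ decreases entrywise among nonnegative matrices. Entrywise decrease of nonnegative matrices decreases $\lambda_{\max}$ by Perron--Frobenius.
\item For a row staying medium, $r_i$ is nonincreasing and $\sigma d_i-2K^3<0$ is known from $u_{i\sigma}\le0$. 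The first term of \eqref{eq:barrier} therefore only becomes more negative as $r_i$ shrinks, provided the change in $G_i/(2r_i^2)$ is controlled. Because $G_i\le q_i\le r_i^2$ both before and after, that term stays at most $1/2$. The delicate point is that shrinking $r$ can increase $G_i/r_i^2$. I would bound the net change using $\sigma d_i-2K^3\le -r_i(12K-\tfrac12)$, which is derived from $u\le0$. This shows that the decrease of the first term dominates.
\item Shrinking $r_i$ raises the prefactor $1/r_i^4$ in $w$. I would handle this via the same mechanism, since the exponential $e^{u/(2K)}$ falls by at least a factor compensating $(r/r')^4$. This is the main obstacle and needs the constant $12K$ margin.
\item A row moving from large to medium drops its $\mathcal B$ term and gains a $W$ term. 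The new barrier is below $-1$ by the computation above, now with $d_i=0$ from the large-row hypothesis. Showing that this exchange does not increase $\lambda_{\max}$ beyond the threshold is handled by the small weight $e^{-K}$.
\item A row becoming completed simply vanishes from $M$.
\item Coordinates leaving $V$ remove a row and column of $M$, which cannot raise $\lambda_{\max}$ by Cauchy interlacing. Freezing also raises $x_j^2$ to $1$, so $-\eta\sum_j x_j^2$ is unchanged or decreases.
\end{itemize}
Together, $L$ does not increase, and hence neither does $\Phi$.
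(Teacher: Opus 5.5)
Your treatment of freezing (a principal submatrix) and of rows that stay medium (entrywise decrease plus Perron--Frobenius) is fine. The gap is in the two places where a medium-row term $w_{i\sigma}p_ip_i^{\mathsf T}$ must be dominated by a diagonal term, and both fail for the same reason. In the initial bound you replace $e^{u_{i\sigma}/(2K)}$ by $e^{-K/2}$. That discards the $r_i$-dependence of the exponent and leaves the prefactor $K^2/(2r_i^4)$, which is unbounded as $r_i\to0$. After that no trace or Schur-test summation can close, since, as you noticed, a column may meet arbitrarily many rows. So $L\le2/K$ is never actually proved. The missing idea is to keep the full term $-2K^3/r_i$. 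With $d_i=0$, $G_i\le r_i^2$ and $t=K/r_i\ge1$,
\[
 2r_i^2w_{i\sigma}=t^2e^{u_{i\sigma}/(2K)}\le t^2e^{-Kt+6+1/(4K)}<\frac1K .
\]
Combine this with the weighted Cauchy--Schwarz inequality $p_ip_i^{\mathsf T}\preceq q_i\Diag(p_i)\preceq r_i^2\Diag(p_i)$, which gives $\sum_\sigma w_{i\sigma}p_ip_i^{\mathsf T}\preceq\Diag(p_i)/K$. Summing over medium and large rows and using $\sum_ip_{ij}\le1$ gives the joint bound $W+\mathcal B\preceq\frac1K\sum_i\Diag(p_i)\preceq I/K$, hence $L\le1/K+\eta n=2/K$. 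The bound has to be joint. Your bounds $\|\mathcal B\|_{\op}\le1/K$ and $\eta s\le1/K$ already add up to $2/K$, so any separate positive bound on $\|W\|_{\op}$ overshoots.

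The large-to-medium exchange needs exactly the same inequality, and your sketch does not supply it. The new rank-one terms have off-diagonal entries where the old contribution $\Diag(p_i)/K$ to $\mathcal B$ vanishes, so entrywise Perron--Frobenius monotonicity does not apply. The appeal to a ``small weight $e^{-K}$'' again ignores the factor $1/r_i^4$, and the estimate must be uniform in $r_i\in(0,K]$. Showing that $L$ stays below a threshold is also not what is required: when $L>1$, the claim that $\Phi$ does not increase needs $L$ not to increase at all. The Loewner bound above, followed by $\Diag(p_i^{\rm new})\preceq\Diag(p_i^{\rm old})$, shows the exchange decreases $M$ in positive semidefinite order, which is what you need.

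By contrast, the weight monotonicity you flag as the main obstacle does go through with your own discrete mechanism. From $\sigma d_i-2K^3\le-12Kr$ at the old scale $r$ and $G_i\le(r')^2$ at the new scale $r'<r$, one gets $u(r')-u(r)\le-(12K-1)(r-r')/r'$. With $4\log(r/r')\le4(r-r')/r'$ this gives $\log\bigl(w(r')/w(r)\bigr)\le(-2+\tfrac1{2K})(r-r')/r'<0$; the paper obtains the same by differentiating $u$ and $\log w$ in $r$. You should also verify the scale invariant \eqref{eq:scale} rather than assume it, since all of these estimates use $G_i\le q_i\le r_i^2$ and $r_i\le K$.
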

\begin{proof}
First verify the scale invariant \eqref{eq:scale}. The candidate
$r_i^{\rm cand}=2^{k_i}\max_{j\in T_i}|a_{ij}|$ lies in $[\sqrt{q_i},2\sqrt{q_i})$.
For a medium row, $K\ge\sqrt{q_i}$, and any previous scale is at least
$\sqrt{q_i}$ because $q_i$ only decreases. Taking the minimum
therefore preserves \eqref{eq:scale} and never increases the scale.

\emph{New medium rows and initialization.}
First compare a newly medium row with the diagonal contribution used
while it was large. Weighted Cauchy--Schwarz gives
$pp^{\mathsf T}\preceq(\sum_jp_j)\Diag(p)$ for $p\ge0$.
This follows by testing on a vector $y$:
$(p^{\mathsf T}y)^2\le(\sum_jp_j)(\sum_jp_jy_j^2)$.
Since $q_i\le r_i^2$, the two new terms in $W$ will be bounded
by $\Diag(p_i)/K$ if
$\gamma_{i\sigma}:=2r_i^2w_{i\sigma}\le1/K$ for both signs.
A new medium row has $q_i\le K^2$ and $d_i=0$: the latter
holds at initialization, and thereafter is inherited from the large
class because removal preserves $d_i$.
Setting $t=K/r_i\ge1$, we obtain
\[
 \log\gamma_{i\sigma}
 \le2\log t-Kt+6+\frac1{4K}
 \le-\frac K2<-\log K.
\]
The middle expression has derivative $2/t-K<0$ for $t\ge1$;
the last inequalities
hold for $K\ge16$. Hence $\gamma_{i\sigma}<1/K$, which bounds
the sum of the two new terms by $\Diag(p_i)/K$.
For a formerly large row, this is bounded by its previous contribution
to $\mathcal B$, since the retained coefficients only decrease.
The same row bounds give $u_{i\sigma}\le-2K^2+12K+1/2<-1$,
so each new barrier starts below the nearly tight threshold.
Initially $\sum_i p_{ij}\le1$, so
$W+\mathcal B\preceq I/K$ and $L\le1/K+\eta n=2/K$.
At $x=0$, this also gives $\Phi=0$.

\emph{Existing medium rows.}
For the scale changes, we adapt the contraction argument of
\cite[Lemma~3.7]{GK26} to the Gram matrix of retained coefficients.
Removal preserves $d_i$ and cannot increase $G_i$.
At fixed scale, neither $u_{i\sigma}$ nor $w_{i\sigma}$ increases.
Keeping the new $G_i$ and $d_i$ fixed, and suppressing the row
and sign indices, differentiation gives
\[
 \frac{\partial u}{\partial r}
 =\frac{12K-u-G/(2r^2)}r,\qquad
 \frac{\partial\log w}{\partial r}
 =\frac{2-(u+G/(2r^2))/(2K)}r.
\]
Both expressions are positive when $u\le0$ and $G\le r^2$.
For every $r\in[r^{\rm new},r^{\rm old}]$,
$G^{\rm new}\le q^{\rm new}\le(r^{\rm new})^2\le r^2$,
so the latter inequality holds throughout the scale change.
As $r$ decreases, $u$ decreases whenever $u\le0$, including
at $u=0$. Thus $u$ cannot cross from this region into $u>0$,
and neither $u$ nor $w$ increases during the scale change.
The vectors $p_i$ also do not increase entrywise. Hence each existing
medium contribution is entrywise nonincreasing.

\emph{The matrices and the potential.}
To turn this entrywise comparison into a bound on $L$, use the
monotonicity of the top eigenvalue for symmetric matrices with
nonnegative entries:
replacing a Rayleigh vector by its coordinatewise
absolute value never decreases the quadratic form, and entrywise
decrease cannot increase the form on a nonnegative vector.
Apply the comparisons in stages. Freezing first takes a principal
submatrix. Updating rows that remain medium is then an entrywise
decrease between symmetric matrices with nonnegative entries.
Next update the large-row terms: diagonal reductions and replacements
by medium-row terms decrease the matrix in positive semidefinite order,
by the first part of the proof. Removing completed-row contributions
also decreases it in that order. Thus no stage increases $L$.
The quadratic term is unchanged at fixed $x$, so $\Phi$ cannot increase.
\end{proof}

\section{A Direction of Negative Curvature}\label{sec:curvature}

We now analyze the movement of $x$ between the removal operations.
The row and eigenvalue counts will bound the number of constraints
needed to obtain a direction of negative curvature.
The active set $V$, retained sets $T_i$, stored contributions
$c_i$, and row scales $r_i$ are fixed while taking derivatives.
Write $W'[h]$ and $W''[h,h]$ for the first and second
directional derivatives of $W$, and order the eigenvalues of $M$
as $L=\lambda_1(M)\ge\cdots\ge\lambda_s(M)$.
Only the weights in $W$ depend on the moving coordinates, so the
derivatives of $M$ equal those of $W$.

The row-barrier gradients in \eqref{eq:row-derivatives} define the
tangency constraints, and the barrier Hessians provide the negative curvature.
By \eqref{eq:spread},
\begin{equation}\label{eq:gradient-bounds}
 \|g_{i\sigma}\|_2\le2,\qquad
 (g_{i\sigma})_j^2\le\frac{2p_{ij}}{r_i^2},\qquad
 \|\nabla^2u_{i\sigma}\|_{\op}\le\frac1K.
\end{equation}
For the norm bounds, \eqref{eq:spread} and \eqref{eq:scale} give
$\|b_i\|_2/r_i\le1$,
$\|p_i\|_2/r_i^2\le1/\sqrt K$, and
$\max_jp_{ij}/r_i^2\le1/K$.
For the coordinatewise bound, put
$z=|b_{ij}|/r_i\le1/\sqrt K\le1/4$. The gradient magnitude is
at most $z+z^2$, using $|x_j|\le1$, and its square is at most
$2z^2$. The norm bound protects barriers during a step, the
coordinatewise bound compares the positive and negative curvature,
and the Hessian bound enters the Taylor estimate.

\Needspace{6\baselineskip}
\begin{lemma}\label{lem:counts}
The matrices defined above satisfy the following bounds.
\begin{enumerate}
\item For $s\ge2$, the top eigenvalue has gap at least $\eta$,
and its unit eigenvector can be chosen strictly positive.
\item At a state after cleanup, with $s>0$ and $L\le2$, the
number of large rows plus the number of nearly tight row-sign pairs
is less than $5s/K$.
\item If also $L>1$, the space $F$ spanned by eigenvectors
with eigenvalue at least $L/2$ has dimension at most $4s/K+1/2$.
\end{enumerate}
\end{lemma}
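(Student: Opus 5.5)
Part 1 follows from Perron--Frobenius together with a rank-one interlacing comparison. Part 2 comes from testing $M$ on the all-ones vector. Part 3 comes from a trace bound. Throughout, write $N=W+\mathcal B$, so $M=N+\eta\ones\ones^{\mathsf T}$. Both $W$ and $\mathcal B$ are positive semidefinite with nonnegative entries, since $w_{i\sigma}>0$ and $p_i\ge0$ entrywise.

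\emph{Part 1.} Every entry of $M$ is at least $\eta>0$. By Perron--Frobenius, $\lambda_1(M)$ is simple and has a strictly positive unit eigenvector. For the gap, use both sides of the rank-one update.
\begin{itemize}
\item By Cauchy interlacing for a positive rank-one perturbation, $\lambda_2(M)\le\lambda_1(N)$.
\item Let $v$ be a top unit eigenvector of $N$. Since $N$ has nonnegative entries, $v$ can be chosen entrywise nonnegative. Then $\ones^{\mathsf T}v=\|v\|_1\ge\|v\|_2=1$, so $\lambda_1(M)\ge v^{\mathsf T}Mv=\lambda_1(N)+\eta(\ones^{\mathsf T}v)^2\ge\lambda_1(N)+\eta$.
\end{itemize}
Combining the two gives $\lambda_1(M)-\lambda_2(M)\ge\eta$.

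\emph{Part 2.} Large rows are few by mass counting: $\sum_iq_i\le s$ and each large row has $q_i>K^2$, so there are fewer than $s/K^2$ of them. For nearly tight pairs, test $M$ on $\ones$:
\[
\ones^{\mathsf T}W\ones\le\ones^{\mathsf T}M\ones\le sL\le2s .
\]
Here $\ones^{\mathsf T}W\ones=\sum w_{i\sigma}q_i^2$. A nearly tight pair has $u_{i\sigma}\ge-1$, so $w_{i\sigma}\ge K^2e^{-1/(2K)}/(2r_i^4)$. Using $r_i^2<4q_i$ from \eqref{eq:scale}, each such pair contributes at least $K^2e^{-1/(2K)}/32$. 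Hence the number of nearly tight pairs is at most $64e^{1/32}s/K^2<67s/K^2$. Adding the large rows gives fewer than $68s/K^2$. For $K=16$ this is below $5s/K=80s/K^2$.

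\emph{Part 3.} Since $M\succeq0$, all eigenvalues are nonnegative. Therefore $\dim F\cdot L/2\le\tr M$, and it suffices to bound the three parts of the trace.
\begin{itemize}
\item $\tr\mathcal B=\sum_{\rm large}q_i/K\le s/K$.
\item $\tr(\eta\ones\ones^{\mathsf T})=\eta s=s/(Kn)\le1/K$.
\item $\tr W=\sum w_{i\sigma}\|p_i\|_2^2\le\sum w_{i\sigma}q_i\max_jp_{ij}\le\frac1K\sum w_{i\sigma}q_i^2$, using \eqref{eq:spread}. By the test vector $\ones$ as in Part 2, this is at most $sL/K$.
\end{itemize}
Dividing by $L/2$ and using $L>1$ gives
\[
\dim F\le\frac{2s}{K}+\frac{2s}{KL}+\frac{2}{KL}<\frac{4s}{K}+\frac{2}{K}\le\frac{4s}{K}+\frac12 .
\]

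The main obstacle I expect is Part 2. The natural first attempt is to bound nearly tight pairs through $\tr W$, but that only yields per-row lower bounds in terms of $\|p_i\|_2^2$, which can be tiny. The fix is the all-ones test vector: it converts each pair's weight into $w_{i\sigma}q_i^2$, and the scale relation $r_i^2<4q_i$ then makes that quantity bounded below by an absolute constant. The only remaining check is that the numerical constants close for $K=16$.
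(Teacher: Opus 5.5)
Your proposal is correct and follows the paper's proof essentially step for step. The gap comes from the same two-sided rank-one comparison: $\lambda_2(M)\le\lambda_1(W+\mathcal B)$ by min--max on $\ones^\perp$, and $\lambda_1(M)\ge\lambda_1(W+\mathcal B)+\eta$ via a nonnegative top eigenvector. The count of nearly tight pairs comes from the same all-ones test vector, under which each pair contributes $w_{i\sigma}q_i^2>K^2e^{-1/(2K)}/32$ to $\ones^{\mathsf T}W\ones\le 2s$. The dimension bound comes from the same trace estimate $\tr W\le\frac1K\ones^{\mathsf T}W\ones$ via \eqref{eq:spread}. Only the constant bookkeeping differs (your $68s/K^2<80s/K^2$ versus the paper's $s/K^2+64e^{1/(2K)}s/K^2<5s/K$), and that difference is immaterial.
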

\begin{proof}
Write $M=M_0+\eta\ones\ones^{\mathsf T}$. Since
$M_0=W+\mathcal B$ is entrywise nonnegative, it has a nonnegative
unit top eigenvector $v_0$, with $\ones^{\mathsf T}v_0\ge1$.
Hence $\lambda_1(M)\ge\lambda_1(M_0)+\eta$.
On $\ones^\perp$, the perturbation vanishes, so min--max gives
$\lambda_2(M)\le\lambda_1(M_0)$, proving the gap.
A nonnegative maximizing unit vector $v$ for $M$ is strictly positive
because every entry of $M$ is positive and $Mv=Lv$.

There are at most $s/K^2$ large rows because $q_i>K^2$ for each
and $\sum_iq_i\le s$. To count nearly tight pairs,
use $p_i^{\mathsf T}\ones=q_i$: each pair contributes
\[
 w_{i\sigma}q_i^2
 =\frac{K^2}{2}e^{u_{i\sigma}/(2K)}
       \left(\frac{q_i}{r_i^2}\right)^2
\]
to $\ones^{\mathsf T}W\ones$, with
$1/4<q_i/r_i^2\le1$ by \eqref{eq:scale}. At a fixed barrier value, this is
within absolute factors of $K^2e^{u_{i\sigma}/(2K)}$, independently
of the retained row norm. For a nearly tight pair, $u_{i\sigma}\ge-1$
therefore gives $w_{i\sigma}q_i^2>K^2e^{-1/(2K)}/32$.
The sum over these pairs is at most
$\ones^{\mathsf T}W\ones\le Ls\le2s$. Thus the number of row constraints is bounded by
\[
 \frac{s}{K^2}+\frac{64e^{1/(2K)}s}{K^2}<\frac{5s}{K}.
\] 

Finally, the removal rule \eqref{eq:spread} gives
\begin{equation}\label{eq:squared-coefficients}
 \|p_i\|_2^2
 \le\left(\max_jp_{ij}\right)\sum_jp_{ij}
 \le q_i^2/K.
\end{equation}
Summing over row-sign pairs and using $p_i^{\mathsf T}\ones=q_i$,
\begin{equation}\label{eq:gram-trace}
 \tr W
 =\sum_{i,\sigma}w_{i\sigma}\|p_i\|_2^2
 \le\frac1K\sum_{i,\sigma}w_{i\sigma}q_i^2
 =\frac1K\ones^{\mathsf T}W\ones.
\end{equation}
Thus $\tr W\le Ls/K$. Also $\tr\mathcal B\le s/K$, and
 the rank-one term has trace $\eta s$.
Writing $d_F=\dim F$, positivity gives $d_FL/2\le\tr M$.
For $L>1$, therefore,
\[
 d_F\le\frac{2\tr M}{L}\le\frac{4s}{K}+2\eta s
 \le\frac{4s}{K}+\frac12.
 \qedhere
\]
\end{proof}

To compare the positive curvature with the diagonal negative term,
we use the following scalar form of Guillen and Kobzar's
restricted-concavity lemma \cite[Lemma~4.2]{GK26}. Its diagonal
normalization also appears in \cite[Appendix~B.2]{LRR17}.

\begin{lemma}[Restricted trace]\label{lem:restricted-trace}
Let $S\succeq0$ and let $D\succeq0$ be diagonal on $\R^s$.
Suppose $S_{jj}\le\rho D_{jj}$ for every $j$, with $\rho\ge0$.
Every subspace $\mathcal H\subseteq\R^s$ of dimension greater than $\rho s$
contains a nonzero vector $h$ such that
$h^{\mathsf T}(S-D)h\le0$.
\end{lemma}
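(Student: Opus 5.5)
The plan is to reduce the statement to a trace comparison after normalizing $D$ to the identity. The underlying observation is that, on a $k$-dimensional subspace, the compression of a positive semidefinite matrix has smallest eigenvalue at most its trace divided by $k$. That trace is at most the full trace, and the full trace is controlled by the diagonal hypothesis $S_{jj}\le\rho D_{jj}$.

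First I remove the coordinates where $D$ vanishes. Let $Y=\{j:D_{jj}>0\}$ and $Z$ its complement. For $j\in Z$ the hypothesis gives $S_{jj}\le0$, hence $S_{jj}=0$. Since $S\succeq0$, this forces $Se_j=0$, so the $j$-th row and column of $S$ vanish. Consequently, for any $h$, writing $y=h|_Y$, we have $h^{\mathsf T}(S-D)h=y^{\mathsf T}(S_{YY}-D_{YY})y$.

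Let $k=\dim\mathcal H>\rho s$, and consider the restriction map $\mathcal H\to\R^Y$, $h\mapsto h|_Y$. If this map has a nonzero kernel vector $h$, then $h$ is supported on $Z$ and the form equals $0$, so we are done. Otherwise its image $\mathcal H_Y$ has dimension $k>\rho s\ge\rho|Y|$; in particular $|Y|\ge k\ge1$.

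Now normalize. Set $S'=D_{YY}^{-1/2}S_{YY}D_{YY}^{-1/2}\succeq0$, whose diagonal entries satisfy $S'_{jj}=S_{jj}/D_{jj}\le\rho$. Hence $\tr S'\le\rho|Y|<k$. Let $\mathcal H'=D_{YY}^{1/2}\mathcal H_Y$, still of dimension $k$, and let $P$ be the orthogonal projector onto $\mathcal H'$. The compression $PS'P$ restricted to $\mathcal H'$ is positive semidefinite with trace
\[
\tr(PS'P)=\tr(S'^{1/2}PS'^{1/2})\le\tr S'<k.
\]
So its smallest eigenvalue on $\mathcal H'$ is less than $1$. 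Take a corresponding unit eigenvector $y'\in\mathcal H'$; it satisfies $y'^{\mathsf T}S'y'<1=y'^{\mathsf T}y'$.

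Finally I pull this back. Put $y=D_{YY}^{-1/2}y'\in\mathcal H_Y$ and choose $h\in\mathcal H$ with $h|_Y=y$; then $h\ne0$. We get
\[
h^{\mathsf T}(S-D)h=y'^{\mathsf T}(S'-I)y'<0.
\]
This gives the claim, and the case $\rho=0$ is covered by the same argument.

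The only delicate point is the degenerate coordinates where $D_{jj}=0$, which is why the argument splits off $Z$ first. The trace inequality $\tr(PS'P)\le\tr S'$ is routine because $S'\succeq0$.
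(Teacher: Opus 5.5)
Your proof is correct and takes essentially the same route as the paper. You split off the coordinates with $D_{jj}=0$, where $S$ vanishes so that $\ker D\subseteq\ker S$; you normalize by $D^{-1/2}$ on the remaining coordinates; and you use that the normalized matrix has trace less than $\dim\mathcal H$ to find a unit vector in the image of $\mathcal H$ with Rayleigh quotient below one. The difference is cosmetic: you phrase the last step through the smallest eigenvalue of the compression $PS'P$, whereas the paper sums $f_\ell^{\mathsf T}Tf_\ell$ over an orthonormal basis of $D^{1/2}\mathcal H$.
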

A nonzero vector in $\mathcal H\cap\ker D$ settles the kernel
case. Otherwise diagonal normalization preserves dimension, and a
trace smaller than that dimension forces a Rayleigh quotient below one;
see Appendix~\ref{sec:linear-algebra}.

In the next proposition, the extra vector $g_*$ allows us to impose
the gradient constraint from Algorithm~\ref{alg:signing} in addition
to the row constraints.

\begin{proposition}\label{prop:direction}
At any state after coefficient removal and scale updates, with
$s>K$ and $L\le2$, and for any
$g_*\in\R^s$, there is a unit vector $h$ orthogonal to
$g_*$, all large-row vectors and all nearly tight gradients such that
\[
 h^{\mathsf T}\nabla^2\Phi h\le-2\eta.
\]
\end{proposition}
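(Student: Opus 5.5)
The plan is to write the Hessian of $\Phi$ by the chain rule,
\[
 \nabla^2\Phi=3(L-1)_+^2\,\nabla^2L+6(L-1)_+\,\nabla L\,\nabla L^{\mathsf T}-2\eta I,
\]
and then split into two cases according to the value of $L$.

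\emph{Case $L\le1$.} This case is immediate. The cubic $(L-1)_+^3$ is $C^2$, and its first and second derivatives vanish for $L\le1$. Hence $\nabla^2\Phi=-2\eta I$. It then suffices to find any unit vector orthogonal to the constraints. By Lemma~\ref{lem:counts}(2) there are fewer than $5s/K$ row constraints, and $g_*$ adds one more. Since $s>K=16$, the number of constraints is less than $s$, so such a unit vector exists.

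\emph{Case $1<L\le2$.} Here $s>K\ge2$, so Lemma~\ref{lem:counts}(1) shows that $L$ is a simple eigenvalue with gap at least $\eta$. Hence $L$ is twice differentiable, with strictly positive unit eigenvector $v$. The standard perturbation formulas give
\[
 \nabla L[h]=v^{\mathsf T}W'[h]v,\qquad
 \nabla^2L[h,h]=v^{\mathsf T}W''[h,h]v+2\sum_{k\ge2}\frac{(v^{\mathsf T}W'[h]v_k)^2}{L-\lambda_k}.
\]
The $p_i$ are fixed within a move, so only the weights vary. Writing $\theta_{i\sigma}=g_{i\sigma}^{\mathsf T}h$, we get
\[
 W'[h]=\sum w_{i\sigma}\tfrac{\theta_{i\sigma}}{2K}p_ip_i^{\mathsf T},\qquad
 W''[h,h]=\sum w_{i\sigma}\Bigl(\tfrac{\theta_{i\sigma}^2}{4K^2}-\tfrac{1}{2Kr_i^2}\textstyle\sum_j p_{ij}h_j^2\Bigr)p_ip_i^{\mathsf T}.
\]
For nearly tight pairs $\theta_{i\sigma}=0$, so those pairs contribute only negative curvature. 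I would then restrict $h$ to the subspace $\mathcal H$ cut out by four families of constraints:
\begin{enumerate}
\item $g_*$, the large-row vectors and the nearly tight gradients, which number fewer than $5s/K+1$;
\item the single functional $\nabla L$, which kills the rank-one term $6(L-1)_+\nabla L\nabla L^{\mathsf T}$;
\item the functionals $h\mapsto v^{\mathsf T}W'[h]v_k$ for eigenvectors $v_k\in F$ with $k\ge2$, at most $4s/K+1/2$ of them by Lemma~\ref{lem:counts}(3);
\item nothing further, since with the above constraints the remaining eigen-sum runs only over $\lambda_k<L/2$, where $L-\lambda_k\ge L/2\ge1/2$.
\end{enumerate}
On $\mathcal H$ the resolvent sum is therefore at most $4\|W'[h]v\|_2^2$. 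Thus $\dim\mathcal H\ge s(1-9/K)-3$.

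The main step is to show that on $\mathcal H$ the quadratic form $\nabla^2L[h,h]$ is bounded above by $h^{\mathsf T}(S-D)h$, where:
\begin{itemize}
\item $D=\Diag\bigl(\sum w_{i\sigma}(v^{\mathsf T}p_i)^2p_i/(2Kr_i^2)\bigr)$ is the diagonal negative term;
\item $S\succeq0$ collects the positive terms, namely $\sum w(v^{\mathsf T}p_i)^2\theta^2/(4K^2)$ and $4\|\sum w\theta p_i(p_i^{\mathsf T}v)/(2K)\|^2$.
\end{itemize}
I would apply Cauchy--Schwarz in the form $\theta_{i\sigma}^2\le\sum_j\bigl((g_{i\sigma})_j^2/p_{ij}\bigr)\sum_jp_{ij}h_j^2$ together with the coordinate bound $(g_{i\sigma})_j^2\le2p_{ij}/r_i^2$ from \eqref{eq:gradient-bounds}. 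For the $\|W'[h]v\|^2$ term I would also use $L\le2$, to control $\sum w(p_i^{\mathsf T}v)^2\le L$ and $\|p_i\|_2^2\le q_i^2/K$. The goal is to bound the diagonal entries as $S_{jj}\le\rho D_{jj}$ with $\rho=O(1/K)$ small, say $\rho\le1/4$.

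With that bound in hand, Lemma~\ref{lem:restricted-trace} applies, since $\dim\mathcal H>\rho s$ for $s>16$ and $\rho$ small. It gives a nonzero $h\in\mathcal H$ with $\nabla^2L[h,h]\le0$. Normalizing $h$ to unit length then yields $h^{\mathsf T}\nabla^2\Phi h\le-2\eta$.

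I expect the diagonal comparison $S_{jj}\le\rho D_{jj}$ to be the obstacle, especially for the $\|W'[h]v\|^2$ cross term. Its natural bound carries the weight $w(p^{\mathsf T}v)^2\cdot\|p\|^2$ rather than $w(p^{\mathsf T}v)^2p_j$, so it is not obviously diagonal. If it is not directly dominated, I would first try to write it as a Gram-type matrix whose diagonal entries are controlled via \eqref{eq:spread} and the trace estimate \eqref{eq:gram-trace}, with the constants absorbed using $K=16$. If that still fails, I would impose the additional constraints needed to kill the offending directions, provided the dimension count still exceeds $\rho s$.
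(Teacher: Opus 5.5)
Your outline follows the paper's proof closely: the same case split, the same constraints (your items (2) and (3) together are exactly the paper's condition $\operatorname{Proj}_F W'[h]v=0$), the same diagonal matrix $D=\mathcal D$, and the restricted-trace lemma. But the step you flag as the obstacle is the one the argument hinges on, and you leave it unresolved, so there is a genuine gap.

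\textbf{The missing idea: a Gram factorization.} Write $W=\mathcal C\mathcal C^{\mathsf T}$, with one column $\sqrt{w_{i\sigma}}\,p_i$ per medium pair. Then $W'[h]v=\mathcal C z$, where $z_{i\sigma}=\sqrt{w_{i\sigma}}\,\beta\,(g_{i\sigma}^{\mathsf T}h)(p_i^{\mathsf T}v)$ and $\beta=1/(2K)$. We also have $\|z\|_2^2=h^{\mathsf T}\mathcal U h$, where
\[
 \mathcal U=\sum_{i,\sigma} w_{i\sigma}\beta^2(p_i^{\mathsf T}v)^2g_{i\sigma}g_{i\sigma}^{\mathsf T}
\]
is exactly your first positive term. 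Hence
\[
 \|W'[h]v\|_2^2\le\|W\|_{\op}\,h^{\mathsf T}\mathcal U h\le L\,h^{\mathsf T}\mathcal U h .
\]
So the eigenvector-motion term is dominated by the same matrix $\mathcal U$. Its diagonal satisfies $\mathcal U_{jj}\le\mathcal D_{jj}/K$ directly from $(g_{i\sigma})_j^2\le 2p_{ij}/r_i^2$. The restricted-trace lemma only needs diagonal entries, so no bound on the full quadratic form is required. The Cauchy--Schwarz estimate you propose for $\theta_{i\sigma}^2$ would actually lose a factor $|T_i|\ge K$.

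Your fallbacks do not supply this bound. The rule \eqref{eq:spread} and the trace estimate control $\tr W$, not the cross terms in $\|W'[h]v\|_2^2$. Adding constraints is not available either, since the dimension budget is already nearly exhausted.

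\textbf{The constants also break as you wrote them, even with the Gram bound.}
\begin{itemize}
\item Bounding the resolvent sum by $4\|W'[h]v\|_2^2$ (via $L-\lambda_k\ge1/2$) gives a positive part $(1+4L)\mathcal U$. That means $\rho=9/K=9/16$, while $\dim\mathcal H$ is only about $7s/16$, so the lemma does not apply. You must keep the factor $1/L$: from $L-\lambda_k\ge L/2$, the sum is at most $(4/L)\|W'[h]v\|_2^2\le 4h^{\mathsf T}\mathcal U h$. This gives $S=5\mathcal U$ and $\rho=5/K$.
\item Since $v\in F$, the functional $\nabla L[h]=v^{\mathsf T}W'[h]v$ is already one of the $F$-constraints. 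Items (2) and (3) therefore cost at most $\dim F\le 4s/K+1/2$ together, not one more. Then $\dim\mathcal H>(1-9/K)s-3/2>5s/K$, which holds whenever $s/8>3/2$, hence for all $s>K=16$. Your count $s(1-9/K)-3$ would instead require $s>24$.
\end{itemize}
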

\begin{proof}
If $L\le1$, the cubic term in \eqref{eq:potential} has vanishing
Hessian, so $\nabla^2\Phi=-2\eta I$.
By Lemma~\ref{lem:counts}, the row constraints together with
$g_*^{\mathsf T}h=0$ leave a subspace of dimension greater than
$s-5s/K-1>0$; any unit vector in it suffices.
For $L>1$, our intermediate goal is a nonzero feasible direction
with $L'[h]=0$ and $L''[h,h]\le0$. The spectral part of
$\Phi$ then has nonpositive second derivative, and the negative
quadratic term supplies the required strict bound after normalization.
Let $v$ be the unit top eigenvector of $M$, and write
$\beta=1/(2K)$ for the exponential coefficient in
\eqref{eq:matrix-potential}.

\emph{The two curvature terms.}
We separate the positive second-derivative terms of the exponential
weights from the negative terms supplied by the row energies.
Both sums below run over medium rows and signs $\sigma\in\{-1,1\}$:
\[
 \begin{aligned}
 \mathcal U&=\sum_{i,\sigma}w_{i\sigma}\beta^2
                 (p_i^{\mathsf T}v)^2g_{i\sigma}g_{i\sigma}^{\mathsf T},\\
 \mathcal D&=\sum_{i,\sigma}\frac{w_{i\sigma}\beta}{r_i^2}
                 (p_i^{\mathsf T}v)^2\Diag(p_i).
 \end{aligned}
\]
The matrix $\mathcal U$ contains the squared first derivatives
of the barriers, while $\mathcal D$ comes from their negative
diagonal Hessians. Equation~\eqref{eq:gradient-bounds} and
$2\beta=1/K$ give, term by term,
\[
 \beta^2(g_{i\sigma})_j^2
 \le\frac{2\beta^2p_{ij}}{r_i^2}
 =\frac1K\frac{\beta p_{ij}}{r_i^2}.
\]
Thus $\mathcal U_{jj}\le\mathcal D_{jj}/K$, and differentiation gives
\[
 v^{\mathsf T}W''[h,h]v
       =h^{\mathsf T}(\mathcal U-\mathcal D)h.
\]
\emph{The spectral contribution.}
The gap in Lemma~\ref{lem:counts} makes the top eigenvalue simple,
so its unit eigenvector can be chosen smoothly near the current point.
To account for the motion of that eigenvector,
take an arbitrary direction $h$ and an orthonormal eigenbasis
$v,v_2,\ldots,v_s$. Differentiating
$Mv=Lv$ gives, for $\ell\ge2$,
\[
 v_\ell^{\mathsf T}v'[h]
 =\frac{v_\ell^{\mathsf T}W'[h]v}{L-\lambda_\ell(M)}.
\]
Differentiating $L'[h]=v^{\mathsf T}W'[h]v$ gives
\begin{equation}\label{eq:spectral-curvature}
 L''[h,h]=v^{\mathsf T}W''[h,h]v+
 2\sum_{\ell=2}^s
 \frac{|v_\ell^{\mathsf T}W'[h]v|^2}{L-\lambda_\ell(M)}.
\end{equation}
The sum is the nonnegative contribution from eigenvector motion.
We bound its numerators using the Gram factorization:
\begin{equation}\label{eq:gram-bound}
 \|W'[h]v\|_2^2\le L\,h^{\mathsf T}\mathcal U h.
\end{equation}
Indeed, factor $W=\mathcal C\mathcal C^{\mathsf T}$, with one
column $\sqrt{w_{i\sigma}}\,p_i$ for each medium pair. The vector
$z$, indexed by those pairs, with entries
\[
 z_{i\sigma}=\sqrt{w_{i\sigma}}\,\beta
       (g_{i\sigma}^{\mathsf T}h)(p_i^{\mathsf T}v)
\]
satisfies $W'[h]v=\mathcal C z$.
By the definition of $\mathcal U$,
$\|z\|_2^2=h^{\mathsf T}\mathcal U h$, and
$\|\mathcal C\|_{\op}^2=\|W\|_{\op}\le L$, proving \eqref{eq:gram-bound}.

The uniform gap $\eta$ guarantees differentiability and permits
polynomial-cost derivative approximation. Using it directly in
\eqref{eq:spectral-curvature} would introduce a factor $2L/\eta$
depending on $n$ into the eigenvector-motion bound.
We instead cancel terms whose gaps are small relative to $L$,
using the space $F$ from Lemma~\ref{lem:counts}.
For this existence proof only, impose the additional
condition $\operatorname{Proj}_F W'[h]v=0$, where
$\operatorname{Proj}_F$ is the orthogonal projection onto $F$.
Since $W'[h]v$ depends linearly on $h$, this adds at most
$\dim F$ linear constraints. The terms with small denominators
then vanish. It also gives $L'[h]=v^{\mathsf T}W'[h]v=0$,
because $v\in F$.

\emph{The dimension count.}
By Lemma~\ref{lem:counts}, the remaining subspace $\mathcal H$ satisfies
\[
 \begin{aligned}
 \dim\mathcal H
 &>s-\underbrace{\frac{5s}{K}}_{\text{row constraints}}
      -\underbrace{\left(\frac{4s}{K}+\frac12\right)}_{\text{spectral constraints}}
      -\underbrace{1}_{g_*}\\
 &=\left(1-\frac9K\right)s-\frac32>\frac{5s}{K}.
 \end{aligned}
\]
The last inequality is $s/8>3/2$ for $K=16$, valid when $s>K$.

\emph{A direction of negative curvature.}
Terms with $v_\ell\in F$ vanish; every other denominator is at least
$L/2$. Using \eqref{eq:gram-bound} gives
\[
 \begin{aligned}
 L''[h,h]
 &\le h^{\mathsf T}(\mathcal U-\mathcal D)h
        +\frac4L\|W'[h]v\|_2^2\\
 &\le h^{\mathsf T}(\mathcal U-\mathcal D)h
        +4h^{\mathsf T}\mathcal U h\\
 &=h^{\mathsf T}(5\mathcal U-\mathcal D)h.
 \end{aligned}
\]
Because $5\mathcal U_{jj}\le(5/K)\mathcal D_{jj}$,
apply Lemma~\ref{lem:restricted-trace} on $\mathcal H$, with
$S=5\mathcal U$, $D=\mathcal D$ and $\rho=5/K$.
It supplies a nonzero $h$ with $L''[h,h]\le0$.
Normalize this existence direction to unit length. We have now obtained
both $L'[h]=0$ and $L''[h,h]\le0$.
The potential is
$\Phi=f(L)-\eta\|x\|_2^2$, where $f(t)=(t-1)_+^3$.
Since $f'(L)\ge0$ and $L'[h]=0$, the chain rule gives
\[
 h^{\mathsf T}\nabla^2\Phi h
 =f'(L)L''[h,h]+f''(L)(L'[h])^2-2\eta\le-2\eta.
 \qedhere
\]
\end{proof}

\section{Progress and Termination}\label{sec:progress}

A full step decreases $\Phi$, and a short
step fixes a new coordinate. A short step may increase $\Phi$, but
the increase is bounded and there can be at most $n$ such steps.

\begin{proof}[Proof of Proposition~\ref{prop:progress}]
We prove the invariants by induction on the iterations. The initial
state satisfies them by Lemma~\ref{lem:cleanup}; assume they hold
at a state with $s>K$.

\emph{The computed direction.}
Proposition~\ref{prop:direction}, applied with $g_*=\widetilde g$,
gives a feasible unit vector with exact curvature at most $-2\eta$.
Its quadratic form under $Q$ is at most
$-2\eta+\epsilon+\eta<0$, since $\epsilon<\eta/4$.
Lemma~\ref{lem:elimination} therefore finds a suitable $z$.
Writing $h_0=Pz$, we have
\[
 z^{\mathsf T}Qz
 =h_0^{\mathsf T}(\widetilde H+\eta I)h_0+\|(I-P)z\|_2^2<0.
\]
Thus $h_0\ne0$ and
$h_0^{\mathsf T}\widetilde Hh_0<-\eta\|h_0\|_2^2$.
Dividing $h_0$ by its largest absolute coordinate puts its squared
norm in $[1,s]$. Repeatedly halving the vector until its squared
norm is at most one therefore leaves $1/2\le\|h\|_2\le1$.
Since $h\in\ker E$, we have $\widetilde g^{\mathsf T}h=0$
and all row constraints hold exactly.
The approximation errors in \eqref{eq:approximation} give
$|\nabla\Phi^{\mathsf T}h|\le\epsilon\|h\|_2$ and
$h^{\mathsf T}\nabla^2\Phi h\le-(\eta-\epsilon)\|h\|_2^2$.
Using $\epsilon<\eta/4$ and the norm bounds on $h$, we obtain
\begin{equation}\label{eq:usable-direction}
 |\nabla\Phi^{\mathsf T}h|\le\epsilon,\qquad
 h^{\mathsf T}\nabla^2\Phi h\le-\frac{3\eta}{16}.
\end{equation}
\emph{The row bounds.}
The constraint $b_i^{\mathsf T}h=0$ keeps each large-row tracked
sum equal to zero. For a nearly tight pair, \eqref{eq:row-derivatives}
and tangency give, for either sign,
\begin{equation}\label{eq:barrier-step}
 u_{i\sigma}(x\pm th)
 =u_{i\sigma}(x)-\frac{t^2}{2r_i^2}\sum_{j\in V}p_{ij}h_j^2
 \le u_{i\sigma}(x).
\end{equation}
A pair that is not nearly tight starts below $-1$.
By \eqref{eq:gradient-bounds}, its change along the cube segment is
at most $2\tau<1$, so its barrier also remains nonpositive.
These estimates apply before the row data are updated.
Lemma~\ref{lem:cleanup} preserves them during coefficient removal
and scale changes, and supplies the same bound for newly medium rows.

\emph{The potential change.}
Set $\Delta=\eta\tau^2/K$; we will show that every full step
decreases $\Phi$ by at least $\Delta$.
The chosen segment stays inside the cube, with
$0<t\le\tau\le1/4$ and $\|h\|_2\le1$.
Lemma~\ref{lem:smoothness} and
\eqref{eq:usable-direction} give
\[
 \Phi(x\pm th)-\Phi(x)
 \le\epsilon t-\frac{3\eta}{32}t^2+\frac{Hn^2}{6}t^3.
\]
For a full step $t=\tau$, the right side divided by
$\eta\tau^2$ is at most
\[
 \underbrace{K^{-2}}_{\text{gradient error}}
 -\underbrace{3/32}_{\text{quadratic decrease}}
 +\underbrace{(6K)^{-1}}_{\text{Taylor remainder}}
 \le-1/K.
\]
For a short step, $Hn^2t\le\eta/K$, so the quadratic and cubic
terms have nonpositive sum. The potential therefore increases by at
most $\epsilon\tau=\Delta/K$.
Coefficient removal and scale changes cannot increase $\Phi$.
For a short step the chosen sign gives
$|x_{j_*}+\xi th_{j_*}|=|x_{j_*}|+t_{*}|h_{j_*}|=1$,
so a new coordinate freezes.

\emph{Closing the invariant.}
The Taylor estimate used $L\le2$ only at the starting point;
it did not assume the same bound at the endpoint. We now obtain the
endpoint bound from the potential estimate.
For a prefix whose steps satisfy the induction hypothesis, at most
$n$ are short. Including the current step and its cleanup gives
$
 \Phi\le n\Delta/K.
$
Adding back the quadratic term yields
\[
 (L-1)_+^3=\Phi+\eta\|x\|_2^2
 \le\frac1K\left(1+\frac{\tau^2}{K^2}\right)<\frac2K<1.
\]
Thus $L<2$, closing the induction from initialization.

\emph{Counting the steps.}
The potential has the global lower bound $\Phi\ge-\eta n=-1/K$.
For any finite prefix of the run, let $N_{\rm full},N_{\rm short}$ count its full
and short steps. Since the initial potential is zero, the complete
potential budget is
\[
 -\frac1K\le\Phi
 \le-N_{\rm full}\Delta+\frac{N_{\rm short}\Delta}{K},
 \qquad N_{\rm short}\le n.
\]
Rearranging gives
$N_{\rm full}\Delta\le1/K+N_{\rm short}\Delta/K$.
Thus the algorithm terminates after
\[
 N_{\rm full}+N_{\rm short}\le\frac1{K\Delta}+\left(1+\frac1K\right)n=O(n^7).
\]
Indeed, $\eta=1/(Kn)$ and $\tau=\frac{\eta}{KHn^2}, \epsilon=\frac{\eta\tau}{K^2}$ give
$\Delta=1/(K^6H^2n^7)$, with $K,H$ absolute.
The bound holds for every finite prefix, so it also excludes an
infinite run.
\end{proof}

\appendix

\section{Auxiliary Linear Algebra}\label{sec:linear-algebra}

We first prove the restricted-trace lemma used in the curvature
argument, then give the elimination procedure that finds a negative
direction for Algorithm~\ref{alg:signing}.

\begin{proof}[Proof of Lemma~\ref{lem:restricted-trace}]
The proof follows \cite[Lemma~4.2]{GK26}, using the diagonal
normalization also found in \cite[Appendix~B.2]{LRR17}.
To carry out this normalization, first separate any zero diagonal entries.
If $D_{jj}=0$, then $S_{jj}=0$, and positivity gives
$|S_{jk}|^2\le S_{jj}S_{kk}=0$.
Thus $\ker D\subseteq\ker S$, which proves the lemma if
$\mathcal H\cap\ker D\ne\{0\}$.

Otherwise $D^{1/2}$ is injective on $\mathcal H$, so its image has the
same dimension $d=\dim \mathcal H$.
Restricting to the coordinates for which $D_{jj}>0$, the matrix
$T=D^{-1/2}SD^{-1/2}$ is positive semidefinite and
$\tr T=\sum_{j:D_{jj}>0}S_{jj}/D_{jj}\le\rho s$.
For an orthonormal basis $f_1,\ldots,f_d$ of $D^{1/2}\mathcal H$,
positivity implies
$\sum_{\ell=1}^d f_\ell^{\mathsf T}Tf_\ell\le\tr T<d$.
Choose a unit $f$ in this image with $f^{\mathsf T}Tf<1$,
and write $f=D^{1/2}h$ for $h\in \mathcal H$.
Since $\ker D\subseteq\ker S$, we have
$h^{\mathsf T}Sh=f^{\mathsf T}Tf<1=h^{\mathsf T}Dh$.
\end{proof}

\begin{lemma}\label{lem:elimination}
Given a symmetric matrix $Q\in\R^{s\times s}$ with a negative
eigenvalue, a vector $z$ with $z^{\mathsf T}Qz<0$ can be found
in $O(s^3)$ arithmetic operations and comparisons.
\end{lemma}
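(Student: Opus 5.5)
We will find $z$ by symmetric Gaussian elimination, which is essentially an $LDL^{\mathsf T}$ factorization. At each stage the procedure either stops with an explicit negative direction or reduces the dimension by one. We maintain a symmetric working matrix $Q_k$ together with a nonsingular transformation $X_k$ such that $Q_k=X_k^{\mathsf T}QX_k$ restricted to the not-yet-eliminated coordinates, with $X_0=I$. Whenever we find a vector $y$ with $y^{\mathsf T}Q_ky<0$, we return $z=X_ky$, since then $z^{\mathsf T}Qz=y^{\mathsf T}Q_ky<0$.

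At each stage we examine the current matrix $Q_k$ and proceed by cases.
\begin{enumerate}
\item If some diagonal entry satisfies $(Q_k)_{jj}<0$, return $y=e_j$.
\item If every diagonal entry is nonnegative and some $(Q_k)_{jj}>0$, pivot on it. Subtract multiples of row and column $j$ to clear the off-diagonal entries in that row and column. This is the congruence $Q_k\mapsto U^{\mathsf T}Q_kU$, where $U=I-e_j\ell^{\mathsf T}$ is unit triangular up to a permutation and $\ell_i=(Q_k)_{ji}/(Q_k)_{jj}$ for $i\neq j$. Record $j$ as eliminated and compose $X_{k+1}=X_kU$. The Schur complement on the remaining coordinates is again symmetric.
\item If the remaining diagonal is entirely zero but some off-diagonal entry $(Q_k)_{jl}\neq0$, return $y=e_j-\operatorname{sgn}((Q_k)_{jl})\,e_l$. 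Its quadratic form is $-2|(Q_k)_{jl}|<0$.
\item If the remaining block is identically zero, pass to the next stage with nothing to pivot.
\end{enumerate}

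Correctness rests on Sylvester's law of inertia: congruence by nonsingular matrices preserves the existence of a negative eigenvalue. Suppose the procedure never stops in cases 1 or 3. Then every pivot is positive and every terminal block is zero, so $Q$ is congruent to a diagonal matrix with nonnegative entries and is therefore positive semidefinite. This contradicts the hypothesis, so some stage must produce $y$.

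For the cost, each pivot updates the Schur complement in $O(s^2)$ operations, and there are at most $s$ pivots. Instead of storing $X_k$ explicitly, we reconstruct $z$ at the end by back-substitution through the recorded multipliers. Each elimination stored the multipliers $\ell$, and undoing one costs $O(s)$, so $X_ky$ costs $O(s^2)$ in total. Scanning the diagonal and the off-diagonal block for cases 1 and 3 costs $O(s^2)$ per stage. The total is therefore $O(s^3)$ arithmetic operations and comparisons.

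The main obstacle is the bookkeeping that justifies $z=X_ky$. The eliminated coordinates are zero in $y$, so we must check that the block structure $X_k^{\mathsf T}QX_k=\Diag(\text{pivots})\oplus Q_k$ is maintained. That structure ensures a vector supported on the remaining coordinates pulls back exactly. We will verify this by induction on the pivots, using the fact that each congruence $U$ acts as the identity on the already-eliminated block. All operations used are ring operations, one division per multiplier, and sign comparisons, so no square roots are needed and the real-RAM model suffices.
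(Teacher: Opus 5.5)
Your proposal is correct and follows the paper's proof essentially step for step: a negative diagonal entry gives a coordinate vector, and a positive diagonal entry is pivoted out via the Schur complement, with the multipliers stored for back-substitution. An all-zero diagonal with some $Q_{jl}\ne0$ yields $e_j\mp e_l$ with form $-2|Q_{jl}|$, and a zero residual block would make $Q$ positive semidefinite. Your explicit congruence bookkeeping $X_k^{\mathsf T}QX_k=\Diag(\text{pivots})\oplus Q_k$ just formalizes the paper's completing-the-square remark. Your case~4 is better stated as the terminal case that the hypothesis rules out, rather than a step that ``passes to the next stage''.
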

\begin{proof}
A negative diagonal entry gives a coordinate vector.
Otherwise, if a positive diagonal entry is present, permute it to
the first position and write
$Q=\left(\begin{smallmatrix}a&b^{\mathsf T}\\b&C\end{smallmatrix}\right)$,
with $a>0$. The identity
\[
 \begin{pmatrix}-b^{\mathsf T}y/a\\y\end{pmatrix}^{\mathsf T}
 Q
 \begin{pmatrix}-b^{\mathsf T}y/a\\y\end{pmatrix}
 =y^{\mathsf T}(C-bb^{\mathsf T}/a)y
\]
reduces the search to the smaller matrix $C-bb^{\mathsf T}/a$,
the Schur complement. Completing the square separates a positive
square with coefficient $a$ from this smaller quadratic form,
which must therefore still have a negative eigenvalue. Repeat,
then recover the original coordinates using the displayed substitution
at each elimination step.
If every remaining diagonal entry is zero, choose the least pair
$(j,k)$, with $j<k$, for which the residual entry $Q_{jk}\ne0$.
Use $e_j-e_k$ when $Q_{jk}>0$, and $e_j+e_k$ when
$Q_{jk}<0$; the quadratic form is $-2|Q_{jk}|$.
A zero residual matrix would make the original matrix positive
semidefinite, contrary to the hypothesis.
Choose the least eligible pivot or index pair at each step. The
Schur-complement updates take $O(s^3)$ operations in total.
\end{proof}

\section{Derivative Bounds and Arithmetic Implementation}\label{sec:computation}

The algorithm uses a Taylor estimate to choose its step length and
approximations to $\nabla\Phi$ and $\nabla^2\Phi$ to find its
direction. The computations use the arithmetic model of
Section~\ref{sec:overview}.
Throughout the appendix, $V,T_i,c_i,r_i$ are held fixed when
differentiating: only the active coordinates $x_V$ vary, and no
cleanup is performed inside a Taylor segment.
Since the parameter $K$ chosen in the coefficient-removal rule is
fixed, all implied constants below are absolute.
In the eigenvalue derivative formulas, the denominators are controlled
by the gap $\eta=1/(Kn)$ in \eqref{eq:matrix-potential}, independently
of how small the row norms are.

\subsection{A uniform Taylor estimate}

\begin{lemma}\label{lem:smoothness}
There is an absolute constant $H\ge1$ such that, at every state
after coefficient removal and scale updates, with $s>K$ and
$L\le2$, keeping
$V,T_i,c_i,r_i$ fixed,
\[
 \left|\Phi(x+th)-\Phi(x)-t\langle\nabla\Phi,h\rangle
       -\frac{t^2}{2}h^{\mathsf T}\nabla^2\Phi h\right|
 \le\frac{Hn^2}{6}|t|^3
\]
for every $h\in\R^s$ with $\|h\|_2\le1$ and every
$t\in[-1/4,1/4]$ for which the segment stays in the cube.
\end{lemma}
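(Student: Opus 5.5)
We reduce the estimate to a uniform bound on the third directional derivative along the segment. By Taylor's theorem with Lagrange remainder it suffices to show $|\frac{d^3}{dt^3}\Phi(x+th)|\le Hn^2$ at every point of the segment where $\Phi$ is three times differentiable, together with $C^2$ regularity of $\Phi$ along the whole segment. The quadratic term $-\eta\|x\|_2^2$ has vanishing third derivative. So only $f(L)$ with $f(t)=(t-1)_+^3$ matters. Since $f$ is $C^2$ with $f'''$ bounded where it exists, we use the chain rule
\[
(f\circ L)'''=f'''(L)(L')^3+3f''(L)L'L''+f'(L)L'''
\]
piecewise, on the two sides of $L=1$. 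It then suffices to bound $L$, $|L'|$, $|L''|$ and $|L'''|$ along the segment.

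\emph{Step 1: $L$ stays bounded along the segment.} Only the weights $w_{i\sigma}=\frac{K^2}{2r_i^4}e^{u_{i\sigma}/(2K)}$ move. By \eqref{eq:gradient-bounds}, $\|g_{i\sigma}\|_2\le2$ and $\|\nabla^2u_{i\sigma}\|_{\op}\le1/K$, so over $|t|\le1/4$ each $u_{i\sigma}$ changes by at most $2|t|+|t|^2/(2K)<1$. Hence each weight changes by a factor at most $e^{1/(2K)}$, and $W(x+th)\preceq e^{1/(2K)}W(x)$ because $W$ is a nonnegative combination of $p_ip_i^{\mathsf T}$. Therefore $L\le 2e^{1/(2K)}+\text{const}\le3$ throughout.

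\emph{Step 2: derivatives of $W$.} Set $\beta=1/(2K)$. Then
\[
W'[h]=\sum w_{i\sigma}\beta\,(g_{i\sigma}^{\mathsf T}h)\,p_ip_i^{\mathsf T},
\]
and $W''$ and $W'''$ have the analogous forms, with coefficients that are polynomials in $\beta g^{\mathsf T}h$ and $\beta h^{\mathsf T}\nabla^2u\,h$. The third derivative of $u$ vanishes. Every such coefficient is bounded by an absolute constant times $w_{i\sigma}$. Each $W^{(k)}$ is a signed combination of the PSD terms $w_{i\sigma}p_ip_i^{\mathsf T}$ with bounded multipliers, so $\|W^{(k)}\|_{\op}\le C\|W\|_{\op}\le CL\le 3C$.

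\emph{Step 3: derivatives of $L$.} By Lemma~\ref{lem:counts} the top eigenvalue is simple with gap at least $\eta$. One caveat: that lemma is stated at the base point. Along the segment the gap argument of Lemma~\ref{lem:counts} still applies verbatim, since it only uses entrywise nonnegativity of $W+\mathcal B$. Hence $L$ is real-analytic in $t$ along the segment. Standard perturbation formulas give
\[
L'=v^{\mathsf T}W'v,\qquad
L''=v^{\mathsf T}W''v+2\sum_{\ell\ge2}\frac{|v_\ell^{\mathsf T}W'v|^2}{L-\lambda_\ell}.
\]
$L'''$ is a sum of terms of the form $v^{\mathsf T}W'''v$, terms with one resolvent $R=(L-M)^{-1}$ restricted to $v^\perp$, and terms with two resolvents, such as $v^{\mathsf T}W'RW'RW'v$ and $(v^{\mathsf T}W'v)\,v^{\mathsf T}W'R^2W'v$. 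Since $\|R\|_{\op}\le1/\eta=Kn$, the resulting bounds are
\[
|L'|=O(1),\qquad |L''|=O(n),\qquad |L'''|=O(n^2).
\]
Combining with $f',f'',f'''=O(1)$ on $L\le3$ gives $|(f\circ L)'''|=O(n^2)$, which yields an absolute $H$.

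\textbf{Main obstacle.} The hard part is the third-order perturbation formula. We need to write $L'''$ exactly, presumably by differentiating $L''$ using $v'=RW'v$ and $R'=R(L'-W')R$. We then must check that no term has three resolvent factors, which would give $n^3$. Counting shows that each additional resolvent comes with an additional $W'$, and third order involves at most three $W'$ factors separated by two gaps. This is what produces $n^2$.

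A secondary point is the kink of $f$ at $L=1$. This is handled by applying the Lagrange estimate on subintervals, using that $f\circ L$ is $C^2$ and piecewise $C^3$. The remainder bound survives via the integral form of the Taylor remainder, with $f''\circ L$ Lipschitz of constant $O(n^2)$.
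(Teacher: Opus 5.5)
Your proposal is correct and follows essentially the paper's proof: the same bound on $L$ along the segment from $W(y)\preceq e^{1/(2K)}W(x)$, the sandwich $-CW\preceq W^{(k)}\preceq CW$, the gap $\eta$ persisting along the segment so the reduced resolvent has norm at most $Kn$, the bounds $|L'|=O(1)$, $|L''|=O(n)$, $|L'''|=O(n^2)$, and Lipschitz continuity of $(f\circ L)''$ across $L=1$. The paper avoids your ``main obstacle'' by writing $L'''=v^{\mathsf T}M'''v+4(v')^{\mathsf T}M''v+2(v'')^{\mathsf T}M'v+2(v')^{\mathsf T}M'v'$ and bounding $\|v'\|_2=O(n)$ and $\|v''\|_2=O(n^2)$ via $v''_\perp=R_*(M''v+2(M'-L'I)v')$ and $v^{\mathsf T}v''=-\|v'\|_2^2$; if you instead differentiate the reduced resolvent $R$, the correct formula is $R'=R(W'-L'I)R-Rv'v^{\mathsf T}-v(v')^{\mathsf T}R$, not $R(L'-W')R$, but the extra terms still carry two resolvents and one $W'$, so your $O(n^2)$ count survives.
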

\begin{proof}
Fix a unit direction $h$. Primes denote derivatives along
$x+th$, and the indices on $u,w$ are suppressed. Throughout the
cube, \eqref{eq:gradient-bounds} gives
$|u'|\le2$, $|u''|\le1/K$, and $u'''=0$.
Writing $\beta=1/(2K)$, the first three weight derivatives
divided by $w$ are
$\beta u'$,
$\beta^2(u')^2+\beta u''$, and
$\beta^3(u')^3+3\beta^2u'u''$.
Each has absolute value at most one.
At a point $y=x+th$ on the segment, the change in $u$ is at most
$1/2$, since $|t|\le1/4$. Each weight therefore increases by
a factor less than two, so $M(y)\preceq2M(x)$ and $L(y)\le4$.
For the first three derivatives along this line, positivity of each
matrix summand gives
$-W\preceq M^{(j)}\preceq W$ for $j=1,2,3$.
Consequently $\|M^{(j)}\|_{\op}\le4$.

To bound the derivatives of $L$, we must also bound the change in
its positive unit eigenvector $v$. The gap proof in
Lemma~\ref{lem:counts} applies throughout the segment, so $v$ can
be chosen smoothly along it. The matrix $LI-M$ is invertible
on $v^\perp$, with inverse norm at most $1/\eta=Kn$.
Let $R_*$ denote that inverse extended by zero on the span of $v$;
this is the reduced resolvent.
Writing $v''_\perp$ for the projection of $v''$ onto $v^\perp$,
differentiation of $Mv=Lv$ and $v^{\mathsf T}v=1$ gives
\[
 v'=R_*M'v,\qquad
 v''_\perp=R_*\bigl(M''v+2(M'-L'I)v'\bigr),\qquad
 v^{\mathsf T}v''=-\|v'\|_2^2.
\]
The identity $L'=v^{\mathsf T}M'v$ gives $|L'|\le4$.
Together with $\|R_*\|_{\op}\le Kn$, the preceding identities give
$\|v'\|_2\le4Kn$ and
$\|v''\|_2\le4Kn+80K^2n^2$.
Successive differentiation of $L'=v^{\mathsf T}M'v$ yields
\[
 \begin{aligned}
 L''&=v^{\mathsf T}M''v+2(v')^{\mathsf T}M'v,\\
 L'''&=v^{\mathsf T}M'''v+4(v')^{\mathsf T}M''v
          +2(v'')^{\mathsf T}M'v+2(v')^{\mathsf T}M'v'.
 \end{aligned}
\]
Substituting the matrix and eigenvector bounds yields
$|L'|\le4$, $|L''|\le4+32Kn$, and
$|L'''|\le4+96Kn+768K^2n^2$.
For $f(t)=(t-1)_+^3$, the derivatives on $[0,4]$ obey
$|f'|\le27$, $|f''|\le18$, and $|f'''|\le6$ where defined.
Thus
\[
 |(f\circ L)'''|
 \le6|L'|^3+54|L'||L''|+27|L'''|
 =O(n^2).
\]
The negative quadratic in $\Phi$ has zero third derivative, so
this also bounds the third derivative of the potential wherever it exists.
Along the segment, $L$ is smooth with the derivative bounds above.
Since $f''$ is Lipschitz, $(f\circ L)''$ is absolutely
continuous, including across $L=1$. The displayed almost-everywhere
bound makes $(f\circ L)''$ Lipschitz with constant $Hn^2$.
For the fixed choice $K=16$, the explicit estimates above give
$|(f\circ L)'''|<2^{24}n^2$, so $H=2^{24}$ is admissible in
$\tau=\frac{\eta}{KHn^2}, \epsilon=\frac{\eta\tau}{K^2}$. Integrating the difference of second derivatives
twice gives the stated Taylor remainder.

For $\|h\|_2\le1$, the case $h=0$ is immediate.
Otherwise apply the unit-direction estimate to $h/\|h\|_2$
with step length $t\|h\|_2$. The remainder gains a factor
$\|h\|_2^3\le1$.
\end{proof}

\subsection{Computing the derivatives}

The Taylor estimate controls the error of a finite move. To choose
that move, Algorithm~\ref{alg:signing} also needs the gradient and
Hessian to the accuracy prescribed in \eqref{eq:approximation}.

\begin{lemma}\label{lem:numerical}
At a state after coefficient removal and scale updates, with
$s>K$, $L\le2$ and
$u_{i\sigma}\le0$ for every medium pair, one can compute
$\widetilde g$ and a symmetric $\widetilde H$ satisfying
\eqref{eq:approximation} in
$O(mn^2+(n^3+m)\log(2mn/\epsilon))$
real arithmetic operations and comparisons, for $0<\epsilon<1$.
\end{lemma}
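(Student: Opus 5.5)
The plan is to compute $\nabla\Phi$ and $\nabla^2\Phi$ from closed-form eigen-derivative formulas, and to replace every transcendental quantity (the exponential weights, the top eigenvalue and eigenvector, the reduced resolvent) by a rationally computed approximation whose error is propagated through Lipschitz bounds. The exact formulas are
\[
 \nabla\Phi=f'(L)\nabla L-2\eta x_V,\qquad
 \nabla^2\Phi=f'(L)\nabla^2L+f''(L)\nabla L\nabla L^{\mathsf T}-2\eta I,
\]
where $\partial_jL=v^{\mathsf T}\partial_jW\,v$ and, by \eqref{eq:spectral-curvature} in polarized form, $\partial_{jk}L=v^{\mathsf T}\partial_{jk}Wv+2(\partial_jWv)^{\mathsf T}R_*(\partial_kWv)$. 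The derivatives of $W$ are explicit:
\[
 \partial_jW=\sum_{i,\sigma}w_{i\sigma}\beta(g_{i\sigma})_jp_ip_i^{\mathsf T},\qquad
 \partial_{jk}W=\sum_{i,\sigma}w_{i\sigma}\bigl(\beta^2(g_{i\sigma})_j(g_{i\sigma})_k-\beta\delta_{jk}p_{ij}/r_i^2\bigr)p_ip_i^{\mathsf T}.
\]
It follows that the only quantities we need are the scalars $\alpha_{i\sigma}=w_{i\sigma}(p_i^{\mathsf T}v)^2$, the vectors $\mathcal Cz$, and $R_*$. Concretely, $\nabla L=\sum\alpha_{i\sigma}\beta g_{i\sigma}$, the first Hessian part is $\mathcal U-\mathcal D$, and the second part is $2Y^{\mathsf T}R_*Y$ with $Y=[\partial_1Wv\ \cdots\ \partial_sWv]=\mathcal C\,\Diag(\sqrt{w}\,\beta\,p^{\mathsf T}v)\,\Gamma^{\mathsf T}$, where $\Gamma$ stacks the $g_{i\sigma}$. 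All of these assemble in $O(mn^2)$ once $v$ and $R_*$ are available.

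The approximation steps run in the following order.

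First, compute the exponentials $w_{i\sigma}$ to relative accuracy $\delta$. Since $u_{i\sigma}\le0$, this can be done with a truncated Taylor series after argument halving and repeated squaring, using $O(\log(1/\delta))$ operations per pair and $O(m\log(1/\delta))$ in total.

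Second, form $M$, which costs $O(mn^2)$.

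Third, approximate $L$ and $v$. Since the gap is at least $\eta=1/(Kn)$ by Lemma~\ref{lem:counts}, we can compute $L$ by bisection on the Sylvester inertia of $M-\lambda I$. Each test is an $LDL^{\mathsf T}$ elimination costing $O(n^3)$, and $O(\log(n/\delta))$ bisection steps bring $L$ to within $\delta\eta$. Once $\lambda$ is within a small fraction of the gap, we obtain $v$ by inverse iteration with shift $\lambda+\eta/4$, or by a few solves. Alternatively we can take $v$ as the normalized solution of a bordered system, which avoids square roots; otherwise we approximate the square root by Newton's method in $O(\log)$ steps. The reduced resolvent is then $R_*\approx(\lambda I-M+\lambda vv^{\mathsf T})^{-1}-vv^{\mathsf T}/\lambda$, or equivalently the inverse of $LI-M+vv^{\mathsf T}$ restricted to $v^\perp$, computed in $O(n^3)$.

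Fourth, derive the error propagation. The Davis--Kahan theorem with gap $\eta$ gives $\|\tilde v-v\|\lesssim\|\tilde M-M\|/\eta$. The resolvent perturbation satisfies $\|\tilde R-R_*\|\lesssim\|\tilde M-M\|/\eta^2$, since $\|R_*\|\le Kn$. Each quantity in the Hessian formula is bounded polynomially in $n$: $\|Y\|\le4$ by \eqref{eq:gram-bound}, $|f'|,|f''|\le27$ on $[0,4]$, and $\|\mathcal U\|,\|\mathcal D\|=O(1)$. Consequently the total error is $\operatorname{poly}(n)\cdot\delta$, and choosing $\delta=\epsilon/\operatorname{poly}(mn)$ gives the $\log(2mn/\epsilon)$ factor. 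Symmetrize by taking $\widetilde H\leftarrow(\widetilde H+\widetilde H^{\mathsf T})/2$, which does not increase the operator-norm error.

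The main obstacle is the eigenvector step, namely obtaining $v$ and $R_*$ with certified error using only arithmetic and comparisons within $O(n^3\log)$. I would first try inertia bisection, followed by a single exact linear solve of $(M-\mu I)y=e$ with $\mu$ placed just above the computed $L$ and $e$ a vector with positive entries. The error of this solve is controlled by the gap ratio, and iterating it $O(\log(n/\epsilon))$ times contracts the error geometrically at rate at most $1/2$. The remaining operation counts are routine: the Hessian assembly $Y^{\mathsf T}R_*Y$ and $\Gamma$ products cost $O(mn^2+n^3)$, which matches the claimed bound.
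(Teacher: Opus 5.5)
Your plan has the same architecture as the paper's proof. It uses the closed-form formulas \eqref{eq:spectral-derivatives}, the deflated matrix $LI-M+vv^{\mathsf T}$ for the reduced resolvent, error propagation through the gap $\eta$, and assembly of $\nabla L$, $Y$ and the $\mathcal U-\mathcal D$ term as outer-product sums in $O(mn^2)$. The real difference is the eigensolver. You locate $L$ by bisection and then run shifted inverse iteration. The paper instead runs power iteration from $\ones$ for $O(Kn\log(n/\kappa))$ steps, using semidefiniteness and the gap to get contraction ratio $1-1/(3Kn)$, and then takes a Rayleigh quotient. Both routes cost $O(n^3\log(n/\kappa))$. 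Your bisection only needs a positive-definiteness test of $\lambda I-\widehat M$ (unpivoted elimination stopped at the first nonpositive pivot), not a full inertia computation.

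The step that fails as written is the exponential evaluation, because nothing bounds $|u_{i\sigma}|$ in terms of $m,n,\epsilon$. Suppose row $i$ has all $n$ entries equal to a tiny $\theta>0$. Then at the first iteration $r_i\in[\sqrt n\,\theta,2\sqrt n\,\theta)$ and $d_i=0$, so $u_{i\sigma}\le-2K^3/r_i+12K+1/2$. This tends to $-\infty$ as $\theta\to0$, while every hypothesis of the lemma still holds. Argument halving then needs about $\log_2(|u_{i\sigma}|/2K)$ halvings and as many squarings, and each squaring roughly doubles the relative error. So relative accuracy $\delta$ costs $O(\log(|u_{i\sigma}|/\delta))$ operations per pair, and the total is not bounded by $O(mn^2+(n^3+m)\log(2mn/\epsilon))$ in the real-RAM model.

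Relative accuracy is also unnecessary. With $z_i=p_i/r_i^2$ and $\|z_i\|_1\le1$, one has $W=\frac{K^2}{2}\sum_{i,\sigma}e^{u_{i\sigma}/(2K)}z_iz_i^{\mathsf T}$. An absolute error $\rho$ in each $e^{u_{i\sigma}/(2K)}\in(0,1]$ therefore gives $\|\widehat W-W\|_{\op}\le K^2m\rho$. The same holds for the first and second derivative matrices, since their multipliers are at most one by \eqref{eq:gradient-bounds}. So you can return $0$ for arguments below $-J$, where $J=O(\log(1/\rho))$ is the least integer with $2^{-J}\le\rho/4$. On $[-J,0]$, evaluate a Taylor polynomial of degree $O(J)$, clipping negative values so that $\widehat M$ keeps the entrywise nonnegativity behind its gap. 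This is exactly the paper's device. With $\rho=\epsilon/\mathrm{poly}(mn)$ it yields the $m\log(2mn/\epsilon)$ term, and the rest of your error analysis then goes through.
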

\begin{proof}
Set $\kappa=2^{-40}\epsilon/n^6$ and
$\rho=\kappa/[K^2(m+1)]$. These are the matrix and single-exponential
error budgets. We approximate weights, the top eigenpair, the reduced
resolvent and finally the derivatives. Only the required matrix products
are formed; the individual derivative matrices below are analytical devices.

\noindent{\bf Weights.}
With $z_i=p_i/r_i^2$, we have $\|z_i\|_1\le1$ and
\[
 W=\frac{K^2}{2}\sum_{i,\sigma}e^{u_{i\sigma}/(2K)}z_iz_i^{\mathsf T}.
\]
Write $M_j=\partial_jM$ and $M_{jk}=\partial_j\partial_kM$.
Substitute the same nonnegative exponential approximations, each within
$\rho$, into $M,M_j,M_{jk}$, obtaining hatted matrices.
For $\beta=1/(2K)$, the derivative multipliers are
$\beta(g_{i\sigma})_j$ and
$\beta^2(g_{i\sigma})_j(g_{i\sigma})_k+
\beta(\nabla^2u_{i\sigma})_{jk}$, each of magnitude
at most one by \eqref{eq:gradient-bounds}. At most $2m$ rank-one
terms, each of norm at most one before its scalar weight, give total
error at most $K^2m\rho\le\kappa$:
\begin{equation}\label{eq:matrix-errors}
 \|\widehat M-M\|_{\op},\
 \|\widehat M_j-M_j\|_{\op},\
 \|\widehat M_{jk}-M_{jk}\|_{\op}\le\kappa.
\end{equation}
These substitutions approximate analytical derivatives, not derivatives
of the numerical exponential routine. The exact diagonal and rank-one
terms, together with nonnegative weights, keep $\widehat M$ positive
semidefinite and entrywise positive, with gap at least $\eta$ and
top eigenvalue at most $L+\kappa<3$.

All exponential arguments are nonpositive. Start $J=1$ and
$q_{\rm exp}=1/2$; halve $q_{\rm exp}$ and increment $J$
until $q_{\rm exp}\le\rho/4$. For an argument below $-J$,
return zero; the cutoff error is at most $e^{-J}<2^{-J}\le\rho/4$.
On $[-J,0]$, evaluate the Taylor polynomial at zero of degree
$32(J+1)$, clipping negative outputs to zero. With
$N=32(J+1)+1$, the polynomial remainder is at most
\[
 e^J J^N/N!\le e^J(eJ/N)^N<2^{2J-3N}\le2^{-J}\le\rho/4.
\]
Clipping cannot increase absolute error. Horner evaluation costs
$O(J)=O(\log(1/\rho))$, independent of the argument's magnitude.

\noindent{\bf Eigenpair.}
Let $(L,v)$ and $(\widehat L,\widehat v)$ be the exact top
eigenpairs of $M$ and $\widehat M$, with positive unit eigenvectors.
These serve the error analysis; the routine computes
$(\ell,\widetilde v)$ as follows. Run power
iteration from $\ones$, dividing each iterate by its largest
coordinate. This leaves its direction unchanged and its squared norm
in $[1,s]$. The initial unit vector has overlap at least
$1/\sqrt s$ with $\widehat v$. Semidefiniteness and the gap imply
a non-top eigenvalue ratio at most $1-1/(3Kn)$, so after $T$
iterations the angle has tangent at most $\sqrt s e^{-T/(3Kn)}$.
Starting at $q_{\rm pow}=1,J_{\rm pow}=0$, double $q_{\rm pow}$
and increment $J_{\rm pow}$ until $q_{\rm pow}\ge16n/\kappa$.
Take $T=12KnJ_{\rm pow}$; then the tangent is at most $\kappa/16$.

Bisect the inverse norm of the final iterate $y$ on $[0,1]$,
to error $\kappa/(8n)$, comparing $a^2\|y\|_2^2$ with one
for each trial $a$. The resulting vector $\widetilde v$ satisfies
$\|\widetilde v-\widehat v\|_2\le\kappa$, but need not be unit.
Its Rayleigh quotient
$\ell=\widetilde v^{\mathsf T}\widehat M\widetilde v/
(\widetilde v^{\mathsf T}\widetilde v)$ has error at most
$3(\kappa/16)^2\le\kappa$ from $\widehat L$.

The exact eigenvalues satisfy $|\widehat L-L|\le\kappa$, and
$\kappa\le\eta/2$. Projecting
$(\widehat L I-M)\widehat v=(\widehat M-M)\widehat v$
onto $v^\perp$, where the left operator has eigenvalues at least
$\eta-\kappa$, gives a projected norm at most
$\kappa/(\eta-\kappa)$. Positive orientations bound the distance
between the unit eigenvectors by twice this norm. Hence
\begin{equation}\label{eq:eigenpair-errors}
 |\ell-L|\le2\kappa,\qquad
\|\widetilde v-v\|_2\le(4Kn+1)\kappa\le2^7n\kappa.
\end{equation} 

\noindent{\bf Inverse and derivative errors.}
Set $Z=LI-M+vv^{\mathsf T}$. It acts as one on $v$ and as
$LI-M$ on $v^\perp$, so $\|Z^{-1}\|_{\op}\le16n$ and
$R_*=Z^{-1}-vv^{\mathsf T}$. For
$\widehat Z=\ell I-\widehat M+\widetilde v\widetilde v^{\mathsf T}$,
\[
 \|\widehat Z-Z\|_{\op}
 \le3\kappa+3\|\widetilde v-v\|_2\le2^9n\kappa.
\]
Thus $\|Z^{-1}(\widehat Z-Z)\|_{\op}\le2^{13}n^2\kappa<1/2$,
and elimination computes $\widehat Z^{-1}$, of norm at most $32n$.
Define $\widehat R_*=\widehat Z^{-1}-\widetilde v\widetilde v^{\mathsf T}$.
The identity $\widehat Z^{-1}-Z^{-1}
=\widehat Z^{-1}(Z-\widehat Z)Z^{-1}$ gives
\[
 \|\widehat R_*-R_*\|_{\op}
 \le(32n)(2^9n\kappa)(16n)+3\|\widetilde v-v\|_2
 \le2^{19}n^3\kappa,
\]
and $\|\widehat R_*\|_{\op}\le32n+4\le2^6n$.
The spectral derivative formulas are
\begin{equation}\label{eq:spectral-derivatives}
 L_j=v^{\mathsf T}M_jv,\qquad
 L_{jk}=v^{\mathsf T}M_{jk}v+
                    2(M_jv)^{\mathsf T}R_*(M_kv).
\end{equation}
Denote their substituted approximations by
$\widehat{\nabla L},\widehat{\nabla^2L}$. For $y_j=M_jv$
and $\widehat y_j=\widehat M_j\widetilde v$, the base-state
derivative bounds give $\|M_j\|_{\op},\|M_{jk}\|_{\op}\le2$,
$\|y_j\|_2\le2$, $\|\widehat y_j\|_2\le6$, and
$\|\widehat y_j-y_j\|_2\le2\kappa+2\|\widetilde v-v\|_2
\le2^9n\kappa$. Compare the resolvent term one factor at a time:
\[
 \begin{aligned}
 \widehat y_j^{\mathsf T}\widehat R_*\widehat y_k
 -y_j^{\mathsf T}R_*y_k
 &=(\widehat y_j-y_j)^{\mathsf T}\widehat R_*\widehat y_k
 +y_j^{\mathsf T}\widehat R_*(\widehat y_k-y_k)
 +y_j^{\mathsf T}(\widehat R_*-R_*)y_k.
 \end{aligned}
\]
The first two terms are bounded by $2^{18}n^2\kappa$ each, and
the third by $2^{21}n^3\kappa$. Direct Rayleigh terms have error
at most $6\|\widetilde v-v\|_2+4\kappa\le2^{10}n\kappa$.
Multiplying coordinate errors by $\sqrt s$ and entry errors by
$s$ yields
\[
 \|\widehat{\nabla L}-\nabla L\|_2\le2^{10}n^{3/2}\kappa,
 \qquad
 \|\widehat{\nabla^2L}-\nabla^2L\|_{\op}\le2^{24}n^4\kappa.
\]
For $f(t)=(t-1)_+^3$, use
\[
 \nabla\Phi=f'(L)\nabla L-2\eta x_V,\qquad
 \nabla^2\Phi=f'(L)\nabla^2L+
 f''(L)\nabla L\nabla L^{\mathsf T}-2\eta I.
\]
Here $\|\nabla L\|_2\le2$, $\|\nabla^2L\|_{\op}\le2^8n$
and $\|\widehat{\nabla L}\|_2\le3$. Both $L,\ell\in[0,3]$,
where $f',f''$ and the Lipschitz constant of $f'$ are at most
$2^4$, and that of $f''$ is at most $2^3$.
The substituted gradient and Hessian therefore have errors at most
$2^{15}n^2\kappa$ and $2^{29}n^4\kappa$, respectively.
Our choice of $\kappa$ makes these at most
$2^{-25}\epsilon/n^4$ and $2^{-11}\epsilon/n^2$, proving
\eqref{eq:approximation}.

\noindent{\bf Matrix products and cost.}
We compute the spectral gradient approximation $\widehat g_L$,
the matrix $Y$ with columns $\widehat M_j\widetilde v$, and
the Hessian term with entries
$(S_L)_{jk}=\widetilde v^{\mathsf T}\widehat M_{jk}\widetilde v$.
Let $\widehat\alpha_{i\sigma}$ be $K^2/2$ times the chosen
approximation to $e^{u_{i\sigma}/(2K)}$, and put
$\nu_i=z_i^{\mathsf T}\widetilde v$.
The required outer-product sums are
\[
 \begin{aligned}
 \widehat g_L
   &=\sum_{i,\sigma}\widehat\alpha_{i\sigma}\beta\nu_i^2g_{i\sigma},\\
 Y
   &=\sum_{i,\sigma}\widehat\alpha_{i\sigma}\beta\nu_i
                   z_i g_{i\sigma}^{\mathsf T},\\
 S_L
   &=\sum_{i,\sigma}\widehat\alpha_{i\sigma}\nu_i^2
       \left(\beta^2g_{i\sigma}g_{i\sigma}^{\mathsf T}
              -\frac{\beta}{r_i^2}\Diag(p_i)\right).
 \end{aligned}
\]
Substitution in \eqref{eq:spectral-derivatives} gives the approximations
just analyzed:
\[
 \widehat{\nabla L}=\widehat g_L,\qquad
 \widehat{\nabla^2L}=S_L+2Y^{\mathsf T}\widehat R_*Y,
\]
and the resulting Hessian is symmetric.
The outer-product sums cost $O(mn^2)$; dense products and inversion
cost $O(n^3)$. Exponential evaluation, power iteration and bisection
cost $O(m\log(m/\kappa))$, $O(n^3\log(n/\kappa))$ and
$O(\log(n/\kappa))$, respectively. Since
$\kappa=2^{-40}\epsilon/n^6$, the total is
\[
 O\left(mn^2+(n^3+m)\log\frac{2mn}{\epsilon}\right).
 \qedhere
\]
\end{proof}

\subsection{Arithmetic cost}\label{sec:complexity}
The routines use the operations specified in Section~\ref{sec:overview}.
Scale updates and direction normalization take $O(\log(2n))$
doublings or halvings. For scales, $q_i\le s\max_jp_{ij}$ bounds
the ratio to the starting squared scale; for directions, the squared
norm starts in $[1,s]$.
Lemma~\ref{lem:numerical} approximates the derivatives without forming
every derivative matrix separately. The per-iteration costs are
\[
\begin{array}{ll}
 \text{Derivative approximation}
   &O\bigl(mn^2+(n^3+m)\log(2mn/\epsilon)\bigr),\\
 \text{Constraint basis, projection and elimination}
   &O(mn^2+n^3),\\
 \text{Row quantities, scales and initial removal scan}
   &O(mn).
\end{array}
\]
For coefficient removal, process rows in increasing index order.
Each deliberate deletion requires at most one further $O(n)$ scan
of its row. Every entry is removed at most once, so these extra scans
cost $O(mn^2)$ over the entire run.

By $\tau=\frac{\eta}{KHn^2}, \epsilon=\frac{\eta\tau}{K^2}$, $\epsilon$ is an absolute constant times
$n^{-4}$. Proposition~\ref{prop:progress} bounds the number of
updates by $N=O(n^7)$. The total cost is therefore
\begin{equation}\label{eq:total-cost}
 \begin{aligned}
 &N\,O\bigl(mn^2+(n^3+m)\log(2mn/\epsilon)\bigr)+O(mn^2)\\
 &\hspace{2em}=O\bigl((mn^9+n^{10})\log(2+m+n)\bigr).
 \end{aligned}
\end{equation} 

\bibliographystyle{alpha}
\bibliography{komlos-spectral}
\end{document}